\newcommand{\betstep}[1]{\rightarrow_{#1}}
\newcommand{\betredge}[1]{\betstep{#1}}
\newcommand{\wip}{w.i.p.}
\newcommand{\equilibrium}[2]{$(\alpha, k)$-equilibrium}
\newcommand{\equilibria}[2]{$(\alpha, k)$-equilibria}
\newcommand{\su}{\subseteq}
\newcommand{\m}{\mathcal} 
\newcommand{\lf}{}
\newcommand{\dg}{\text{deg}}
\newcommand{\sm}{\setminus}
\newcommand{\SW}{\text{SW}}
\newcommand{\set}[1]{\ensuremath{\{ #1 \}}}
\newcommand{\sset}[2]{\ensuremath{\set{#1 \, \mid \, #2}}}
\title{Efficient Equilibria in \newline Polymatrix Coordination Games}
\author{%
Mona Rahn\inst{1}
\and
Guido Sch\"{a}fer\inst{1,2}
}
\institute{%
Centrum Wiskunde \& Informatica (CWI), Amsterdam, The Netherlands
\and
VU University Amsterdam, The Netherlands
}
\begin{document}
\sloppy
\maketitle
\begin{abstract}
We consider polymatrix coordination games with individual preferences where every player corresponds to a node in a graph who plays with each neighbor a separate bimatrix game with non-negative symmetric payoffs.
In this paper, we study \emph{$\alpha$-approximate $k$-equilibria} of these games, i.e., outcomes where no group of at most $k$ players can deviate such that each member increases his payoff by at least a factor $\alpha$. 
We prove that for $\alpha \ge 2$ these games have the finite coalitional improvement property (and thus $\alpha$-approximate $k$-equilibria exist), while for $\alpha < 2$ this property does not hold. 
Further, we derive an almost tight bound of $2\alpha(n-1)/(k-1)$ on the price of anarchy, where $n$ is the number of players; in particular, it scales from unbounded for pure Nash equilibria ($k = 1)$ to $2\alpha$ for strong equilibria ($k = n$). 
We also settle the complexity of several problems related to the verification and existence of these equilibria.
Finally, we investigate natural means to reduce the inefficiency of Nash equilibria.
Most promisingly, we show that by fixing the strategies of $k$ players the price of anarchy can be reduced to $n/k$ (and this bound is tight). 
\end{abstract}

\section{Introduction}

\sloppy 

In this paper, we are interested in strategic games where the players are associated with the nodes of a graph and can benefit from coordinating their choices with their neighbors. More specifically, we consider \emph{polymatrix coordination games with individual preferences}: We are given an undirected graph $G = (N, E)$ on the set of players (nodes) $N := \set{1, \dots, n}$.  Every player $i \in N$ has a finite set of strategies $S_i$ to choose from and an individual preference function $q^i: S_i \rightarrow \mathbb{R}^+$. Each player $i \in N$ plays a separate bimatrix game with each of his neighbors in $N_i := \sset{j \in N}{ \set{i, j} \in E}$. In particular, every edge $\set{i, j} \in E$ is associated with a payoff function $q^{ij}: S_i \times S_j \rightarrow \mathbb{R}^+$, specifying a non-negative payoff $q^{ij}(s_i, s_j)$ that both $i$ and $j$ receive if they choose strategies $s_i$ and $s_j$, respectively. Given a joint strategy $s = (s_1, \dots, s_n)$ of all players, the overall payoff of player $i$ is defined as 
\begin{equation}\label{eq:payoff-def}
p_i(s) := q^i(s_i) + \sum_{j \in N_i} q^{ij}(s_i, s_j).
\end{equation}

These games naturally model situations in which each player has individual preferences over the available options (possibly not having access to all options) and may benefit in varying degrees from coordinating with his neighbors. For example, one might think of students deciding which language to learn, co-workers choosing which project to work on, or friends determining which mobile phone provider to use. On the other hand, these games also capture situations where players prefer to anti-coordinate, e.g., competing firms profiting equally by choosing different markets.

A special case of our games are \emph{polymatrix coordination games} (without individual preferences, i.e., $q^i = 0$ for all $i$) which have previously been investigated by Cai and Daskalakis~\cite{cai:multiplayer_minmax}. Among other results, the authors show that pure Nash equilibria are guaranteed to exist, but that finding one is PLS-complete. 
Polymatrix coordination games capture several other well-studied games among which are party affiliation games \cite{bal:advertising}, cut games \cite{christo:potential} and congestion games with positive externalities \cite{cwi:opt_cong}.

Yet another special case which will be of interest in this paper are \emph{graph coordination games}. Here every edge $\set{i, j} \in E$ is associated with a non-negative edge weight $w_{ij}$ and the payoff function $q^{ij}$ is simply defined as $q^{ij}(s_i, s_j) = w_{ij}$ if $s_i = s_j$ and $q^{ij}(s_i, s_j) = 0$ otherwise. Intuitively, in this game every player (node) $i \in N$ chooses a color $s_i$ from the set of colors $S_i$ available to him and receives a payoff equal to the total weight of all incident edges to neighbors choosing the same color. These games have recently been studied by Apt et al. \cite{us:coord} for the special case of unit edge weights. 

This paper is devoted to the study of equilibria in polymatrix coordination games with individual preferences. It is not hard to see that these games always admit pure Nash equilibria. 
However, in general these equilibria are highly inefficient. 
One of the most prominent notions to assess the inefficiency of equilibria is the \emph{price of anarchy} \cite{koutsou:worst_case}. It is defined as the ratio in social welfare of an optimal outcome and a worst-case equilibrium. Here the social welfare of a joint strategy $s$ refers to the sum of the payoffs of all players, i.e., $\SW(s) = \sum_{i \in N} p_i(s)$. 

The high inefficiency of our games even arises in the special case of graph coordination games as has recently been shown in \cite{us:coord}. To see this, fix an arbitrary graph $G = (N, E)$ with unit edge weights and suppose each player $i \in N$ can choose between a private color $c_i$ and a common color $c$. Then each player $i$ choosing his private color $c_i$ constitutes a Nash equilibrium in which every player has a payoff of zero. In contrast, if every player chooses the common color $c$ then each player $i$ obtains his maximum payoff equal to the degree of $i$. As a consequence, the price of anarchy is unbounded. 
The example demonstrates that the players might be unable to coordinate on the (obviously better) common choice because they cannot escape from a bad initial configuration by unilateral deviations. In particular, observe that the example breaks if two (or more) players can deviate simultaneously. This suggests that one should consider more refined equilibrium notions where deviations of groups of players are allowed.

In our studies, we focus on a general equilibrium notion which allows us to differentiate between both varying sizes of coalitional deviations and different degrees of player reluctance to deviate. More specifically, in this paper we consider \emph{$\alpha$-approximate $k$-equilibria} as the solution concept, i.e., outcomes that are resilient to deviations of at most $k$ players such that each member increases his payoff by at least a factor of $\alpha \ge 1$. Subsequently, we call these equilibria also \emph{\equilibria{\alpha}{k}} for short.
In light of this refined equilibrium notion, several natural questions arise and will be answered in this paper: Which are the precise values of $\alpha$ and $k$ that guarantee the existence of $(\alpha, k)$-equilibria? What is the price of anarchy of these equilibria as a function of $\alpha$ and $k$? How about the complexity of problems related to the verification and existence of such equilibria? And finally, are there efficient coordination mechanisms to reduce the price of anarchy?

\paragraph{Our contributions.}

We study \equilibria{\alpha}{k} of graph and polymatrix coordination games. 
Our main contributions are summarized below. 

\begin{enumerate}
\item \emph{Existence:}
We prove that for $\alpha \ge 2$ polymatrix coordination games have the finite $(\alpha, k)$-improvement property, i.e., every sequence of $\alpha$-improving $k$-deviations is finite (and thus results in an $(\alpha, k)$-equilibrium). 
We also exhibit an example showing that for $\alpha < 2$ this property does not hold in general. For graph coordination games we show that if the underlying graph is a tree then $(\alpha,k)$-equilibria exist for every $\alpha$ and $k$. On the other hand, if the graph is a pseudotree (i.e., a tree with exactly one cycle) the existence of \equilibria{\alpha}{k} cannot be guaranteed for every $\alpha < \varphi$ and $k \ge 2$, where $\varphi = \frac12(1+\sqrt{5})$ is the golden ratio.

\item \emph{Inefficiency:} 
We show that the price of anarchy of \equilibria{\alpha}{k} for polymatrix coordination games is at most $2\alpha(n-1)/(k-1)$. We also provide a lower bound of $2\alpha(n-1)/(k-1) + 1 - 2\alpha$. In particular, the price of anarchy drops from unbounded for pure Nash equilibria $(k = 1)$ to $2\alpha$ for strong equilibria $(k = n)$, both of which are tight bounds.

\item \emph{Complexity:} 
We settle the complexity of several problems related to the verification and existence of $(\alpha, k)$-equilibria in graph coordination games. Naturally all hardness results extend to the more general class of polymatrix coordination games with individual preferences.
A summary of our results is given in Table~\ref{tab:summary-complex}. 

\begin{table}[t]
\centering\small
\begin{tabular}{|@{\quad}l@{\quad}l@{\quad}|@{\quad}c@{\quad}|}
    \hline
	\multicolumn{2}{|@{\quad}l@{\quad}|@{\quad}}{\textbf{Problem}} & \textbf{Complexity}\\
    \hline
    \hline    
      Verification & \equilibrium{\alpha}{k} ($k$ constant) & P  \\
    	& \equilibrium{\alpha}{k} ($\alpha$ fixed) & co-NP-complete \\
    	& $\alpha$-approximate strong equilibrium  & P  \\
    \hline
    $\text{Existence}$ & $k$-equilibrium ($k \geq 2$ fixed)&  $\text{NP-complete}$  \\
    	& strong equilibrium & $\text{NP-complete}^{\dag}$ \\
    \hline
\end{tabular}
\vspace{1mm}
\caption{%
Complexity of graph coordination games. 
The parameters $\alpha$ and $k$ are assumed to be part of the input unless they are stated to be fixed. 
$^\dag$ Shown to be efficiently computable for forests. 
\label{tab:summary-complex}}
\vspace*{-5mm}
\end{table}

\item \emph{Coordination mechanisms:} 
We  investigate two natural mechanisms that a central coordinator might deploy to reduce the price of anarchy of pure Nash equilibria: (i) asymmetric sharing of the common payoffs $q^{ij}$ and (ii) strategy imposition of a limited number of players. Concerning (i), we show that there is no payoff distribution rule that reduces the price of anarchy in general. As to (ii), we prove that by (temporarily) fixing the strategies of $k$ players according to an arbitrarily given joint strategy $s$, the resulting Nash equilibrium recovers at least a fraction of $k/n$ of the social welfare $\SW(s)$ and this is best possible.
Exploiting this in combination with a $2$-approximation algorithm for the optimal social welfare problem \cite{cwi:opt_cong}, we derive an efficient algorithm to reduce the price of anarchy to at most $2n/k$ for a special class of polymatrix coordination games with individual preferences.

\end{enumerate}

\paragraph{Related work.}
Apt et al.~\cite{us:coord} study $k$-equilibria in graph coordination games with unit edge weights, which constitute a special case of our games. They identify several graph structural properties that ensure the existence of such equilibria. Interestingly, most of these results do not carry over to our weighted graph coordination games, therefore demanding for the new approach of considering approximate equilibria.

Many of the mentioned games have been studied from a computational complexity point of view. In particular, Cai and Daskalakis \cite{cai:multiplayer_minmax} show that the problem of finding a pure Nash equilibrium in a polymatrix coordination game is PLS-complete. Further, they show that finding a mixed Nash equilibrium is in PPAD $\cap$ PLS. 
While this suggests that the latter problem is unlikely to be hard, it is not known whether it is in P. It is easy to see that these results also carry over to our polymatrix coordination games with individual preferences.\footnote{In \cite{cai:multiplayer_minmax} the bimatrix games on the edges may have negative payoffs and this is exploited in the PLS-completeness proof. However, we can accommodate this in our model by adding a sufficiently large constant to each payoff.}

For the special case of party affiliation games efficient algorithms to compute an approximate Nash equilibrium are known \cite{bhal:party_affiliation,cara:approx_eq}. The current best approximation guarantee is $3+\varepsilon$, where $\varepsilon > 0$, due to Caragiannis, Fanelli and Gavin \cite{cara:approx_eq}. The algorithm crucially exploits that party affiliation games admit an exact potential whose relative gap (called \emph{stretch}) between any two Nash equilibria is bounded by 2. The latter property is not satisfied in our games, even for graph coordination games (as the example outlined in the Introduction shows).

A class of games that is closely related to our graph coordination games are \emph{additively separable hedonic games} \cite{bogo:hedonic}. As in our games, the players are embedded in a weighted graph. Every player chooses a coalition and receives as payoff the total weight of all edges to neighbors in the same coalition. These games were originally studied in a cooperative game theory setting. More recently, researchers also address computational issues of these games (see, e.g.,  \cite{aziz:stable_ashg}).
It is important to note that in hedonic games every player can choose every coalition, while in our graph coordination games players may only have limited options.

Anshelevich and Sekar \cite{anshe:social_coord} study coordination games with individual preferences where the players are nodes in a graph and profit from neighbors choosing the same color. However, in their setting the edge weight between two neighbors can be distributed asymmetrically and all players are assumed to have the same strategy set. 
Among other results, they give an algorithm to compute a $(2,n)$-equilibrium and show how to efficiently compute an approximate equilibrium that is not too far from the optimal social welfare.

Concerning the social welfare optimization problem, a $2$-approximation algorithm is given in \cite{cwi:opt_cong} for the special case of polymatrix coordination games with individual preferences where the bimatrix game of each edge has positive entries only on the diagonal.

\paragraph{Our techniques.}

Most of our existence results use a generalized potential function argument for coalitional deviations. In our proof of the upper bound on the inefficiency of $(\alpha, k)$-equilibria we first argue locally for a fixed coalition of players and then use a sandwich bound in combination with a counting argument to derive the upper bound. Most of our lower bounds and hardness results follow by exploiting specific properties and deep structural insights of graph coordination games with edge weights.

It is worth mentioning that our algorithm to compute a strong equilibrium for graph coordination games on trees reveals a surprising connection to a sequential-move version of the game. In particular, we show that if we fix an arbitrary root of the tree and consider the induced sequential-move game then every subgame perfect equilibrium corresponds to a strong equilibrium of the original game. As a consequence, strong equilibria exist and can be computed efficiently. Further, this in combination with our strong price of anarchy bound shows that the \emph{sequential price of anarchy} \cite{LST12} for these induced games is at most $2$, which is a significant improvement over the unbounded price of anarchy for the strategic-form version of the game. This result is of independent interest.

We also note that the $k/n$ bound on the social welfare which is guaranteed by our strategy imposition algorithm is proven via a \emph{smoothness argument} \cite{rough:rpoa}. Besides some other consequences, this implies that our bound also holds for more permissive solution concepts such as correlated and coarse correlated equilibria (see \cite{rough:rpoa} for more details).

\section{Preliminaries}
\label{sec:preliminaries}

Let $\mathcal{G} = (G, (S_i)_{i \in N}, (q^i)_{i \in N}, (q^{ij})_{\set{i,j}\in E})$ be a polymatrix coordination game with individual preferences (\wip) where $G = (N, E)$ is the underlying graph.
Recall that we identify the player set $N$ with $\set{1, \dots, n}$. We first introduce some standard game-theoretic concepts. 

We call a subset $K := \{i_1, \ldots, i_k\} \subseteq N$ of players a \emph{coalition of size $k$}. We define the set of joint strategies of players in $K$ as $S_K := S_{i_1} \times \cdots \times S_{i_{k}}$ and use $S := S_N$ to refer to the set of joint strategies of all players. Given a joint strategy $s \in S$, we use $s_K$ to refer to $(s_{i_1}, \ldots, s_{i_{k}})$ and $s_{-K}$ to refer to $(s_i)_{i \notin K}$. By slightly abusing notation,  we also write $(s_K, s_{-K})$ instead of $s$. If there is a strategy $x$ such that $s_i = x$ for every player $i \in K$, we also write $s = (x_K, s_{-K})$.

Given a joint strategy $s$ and a coalition $K$, we say that $s' = (s'_K, s_{-K})$ is a \emph{deviation of coalition $K$ from $s$} if $s'_i \neq s_i$ for every player $i \in K$; we also denote this by $s \betredge{K} s'$. If we constrain to deviations of coalitions of size at most $k$, we call such deviations \emph{$k$-deviations}. 
We call a deviation \emph{$\alpha$-improving} if every player in the coalition improves his payoff by at least a factor of $\alpha \ge 1$, i.e., for every $i \in K$, $p_i(s') > \alpha p_i(s)$; we also call such deviations \emph{$(\alpha, k)$-improving}. 
We omit the explicit mentioning of the parameters if $\alpha = 1$ or $k = 1$. 
A joint strategy $s$ is an \emph{$\alpha$-approximate $k$-equilibrium} (also called \emph{$(\alpha,k)$-equilibrium} for short) if there is no $(\alpha, k)$-improving deviation from $s$. If $k = 1$ or $k = n$ then we also refer to the respective equilibrium notion as \emph{$\alpha$-approximate Nash equilibrium} and \emph{$\alpha$-approximate strong equilibrium} \cite{aum:strong}.

We say that a finite strategic game has the \emph{finite $(\alpha, k)$-improvement property} (or \emph{$(\alpha, k)$-FIP} for short) if every sequence of $(\alpha, k)$-improving deviations is finite. This notion generalizes the finite improvement property introduced by Monderer \cite{mon:potential} for $\alpha =  k = 1$. A function $\Phi: S \to \mathbb R$ is called an \emph{$(\alpha, k)$-generalized potential} if for every joint strategy $s$, for every $(\alpha, k)$-improving deviation $s' := (s'_{K}, s_{-K})$ from $s$ it holds that $\Phi(s') > \Phi(s)$. It is not hard to see that if a finite game admits an $(\alpha, k)$-generalized potential then it has the $(\alpha, k)$-FIP. 

The \emph{social welfare} of a joint strategy $s$ is defined as $\SW(s):=\sum_{i \in N} p_i(s)$. For $K \subseteq N$, we define $\SW_K(s) := \sum_{i \in K} p_i(s)$. A joint strategy $s^*$ of maximum social welfare is called a \emph{social optimum}. Given a finite game that has an \equilibrium{\alpha}{k}, its \emph{$(\alpha, k)$-price of anarchy (POA)} is the ratio $\SW(s^*)/\SW(s)$, where $s^*$ is a social optimum and $s$ is an \equilibrium{\alpha}{k} of smallest social welfare. In the case of division by zero, we interpret the outcome as $\infty$. Note that if $\alpha' \geq \alpha$ and $k' \leq k$, then every $(\alpha, k$)-equilibrium is an $(\alpha', k')$-equilibrium. Hence the $(\alpha, k)$-PoA lower bounds the $(\alpha', k')$-PoA. 

Due to lack of space, several proofs are omitted from the main part of this paper and can be found in the appendix.

\section{Existence}
\label{sec:existence}

We first give a characterization of the values $\alpha$ and $k$ for which our polymatrix coordination games with individual preferences have the $(\alpha, k)$-FIP. 

\begin{theorem}
\label{thm:ex_ne}
Let $\m G$ be a polymatrix coordination game \wip\ Then:
\begin{enumerate}
\item $\m G$ has the $(\alpha, 1)$-FIP for every $\alpha$.
\item $\m G$ has the $(\alpha, k)$-FIP for every $\alpha \ge 2$ and for every $k$.
\end{enumerate}
\end{theorem}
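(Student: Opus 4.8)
The plan is to exhibit a single function that works as an $(\alpha,k)$-generalized potential in both regimes. I would define
\[
\Phi(s) := \sum_{i \in N} q^i(s_i) + \sum_{\set{i,j} \in E} q^{ij}(s_i, s_j),
\]
which equals $\tfrac12\big(\SW(s) + \sum_{i \in N} q^i(s_i)\big)$ since every edge payoff is counted once in $\Phi$ but contributes twice to $\SW$. Because $\m G$ is finite, $\Phi$ attains finitely many values, so by the observation in Section~\ref{sec:preliminaries} it suffices to show that $\Phi$ strictly increases along every $(\alpha,k)$-improving deviation in the stated range; this immediately yields the $(\alpha,k)$-FIP.

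For part~1, consider a unilateral $\alpha$-improving deviation in which player $i$ switches from $s_i$ to $s_i'$. The only terms of $\Phi$ that change are $q^i$ and the edge terms $q^{ij}$ for $j \in N_i$, each appearing exactly once, and a direct computation gives $\Phi(s') - \Phi(s) = p_i(s') - p_i(s)$; so $\Phi$ is in fact an exact potential for unilateral moves. Since $\alpha \ge 1$ and payoffs are non-negative, $p_i(s') > \alpha\, p_i(s) \ge p_i(s)$, hence $\Phi(s') > \Phi(s)$, as required.

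For part~2, let $s' = (s'_K, s_{-K})$ be an $(\alpha,k)$-improving deviation of a coalition $K$ with $\alpha \ge 2$. I would partition the changed terms of $\Phi$ into the individual preferences of $K$, the \emph{internal} edges $E_K$ with both endpoints in $K$, and the \emph{cross} edges with exactly one endpoint in $K$. Comparing $\Phi$ with $\SW_K(s) = \sum_{i\in K} p_i(s)$ shows that internal edges are counted once in $\Phi$ but twice in $\SW_K$, while individual and cross terms agree; this yields the key identity
\[
\Phi(s') - \Phi(s) = \big(\SW_K(s') - \SW_K(s)\big) - \sum_{\set{i,j}\in E_K}\big(q^{ij}(s_i', s_j') - q^{ij}(s_i, s_j)\big).
\]
The crux is to control the internal-edge correction: by non-negativity of the payoffs it is at most $\sum_{\set{i,j}\in E_K} q^{ij}(s_i', s_j')$, and since each internal edge contributes its payoff to two members of $K$, this is at most $\tfrac12 \SW_K(s')$. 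Hence $\Phi(s') - \Phi(s) \ge \tfrac12 \SW_K(s') - \SW_K(s)$.

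Finally, the factor $\alpha \ge 2$ is exactly what pays for this lost half: each $i \in K$ satisfies $p_i(s') > \alpha\, p_i(s) \ge 2\, p_i(s)$, so summing over the nonempty coalition gives $\SW_K(s') > 2\,\SW_K(s)$, i.e.\ $\tfrac12\SW_K(s') > \SW_K(s)$, and therefore $\Phi(s') > \Phi(s)$. I expect the main obstacle to be precisely the coalitional bookkeeping in the key identity: unlike the unilateral case, $\Phi$ no longer tracks the coalition's payoff change exactly, and the double-counted internal edges must be absorbed by the approximation factor. This also explains why the threshold $\alpha = 2$ is natural and suggests it is tight, matching the claimed failure of the property for $\alpha < 2$.
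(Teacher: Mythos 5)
Your proof is correct, and for Claim 2 it takes a genuinely different route from the paper's. For Claim 1 the two arguments coincide: the paper also uses the exact potential $\Phi(s) = \sum_{i \in N} q^i(s_i) + \sum_{\{i,j\} \in E} q^{ij}(s_i,s_j)$. For Claim 2, however, the paper abandons $\Phi$ and instead shows that the \emph{social welfare} $\SW(s)$ is itself a $(2,k)$-generalized potential: it decomposes $\SW$ into internal, cross and individual contributions of the coalition (its quantities $Q_s(K,K)$, $Q_s(K,\bar K)$, $Q_s(K)$) and cancels the double-counted internal edges against the factor $2$. You instead keep the single function $\Phi$ for both regimes, and your key identity
\[
\Phi(s') - \Phi(s) \;=\; \bigl(\SW_K(s') - \SW_K(s)\bigr) \;-\; \sum_{\{i,j\} \in E_K} \bigl(q^{ij}(s'_i,s'_j) - q^{ij}(s_i,s_j)\bigr)
\]
is correct (internal edges are counted once in $\Phi$ but twice in $\SW_K$, cross and individual terms agree), as is the bound of the correction by $\tfrac12 \SW_K(s')$ and its absorption via $\SW_K(s') > 2\,\SW_K(s)$. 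So the cancellation mechanism is the same in both proofs, but the potential differs, and this difference matters for what each argument yields beyond the theorem. Your version gives a unified statement: one fixed function strictly increases along every deviation covered by either claim, which is aesthetically cleaner and shows that the exact potential itself is a $(2,k)$-generalized potential. The paper's version gives a statement of independent interest: along any sequence of $(2,k)$-improving deviations the social welfare never decreases, so such dynamics converge to an equilibrium whose welfare is at least that of the starting profile --- a guarantee in the spirit of what Section~\ref{sec:coord-mech} needs when fixed players are later released. Your closing remark is also on target: the threshold $\alpha = 2$ is exactly what absorbs the internal double counting, and the paper's Theorem~\ref{thm:no-FIP} confirms it is tight.
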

\begin{proof}
Observe that every $\alpha$-improving deviation is also $\alpha'$-improving 
for $\alpha \ge \alpha'$. It is thus sufficient to prove the claims above for $\alpha = 1$ and $\alpha = 2$, respectively.  

In order to prove the first claim we prove that the game admits an exact potential and thus has the FIP; details are given in Appendix~\ref{sec:app:1}. 

We next prove the second claim for $\alpha = 2$ by showing that $\Phi(s) := \SW(s)$ is a $(2, k)$-generalized potential. Given a joint strategy $s$ and two sets $K, K' \subseteq N$, define 
$$
Q_s(K, K') := \sum_{i \in K,\, j \in N_i \cap K'} q^{ij}(s)
\quad\text{and}\quad 
Q_s(K) := \sum_{i \in K} q^i(s).
$$
Consider a $(2, k)$-improving deviation $s' = (s'_K, s_{-K})$ from $s$. Let $\bar K$ be the complement of $K$. We have $\SW_K(s) = Q_s(K, K) + Q_s(K, \bar K) + Q_s(K)$.
Note that $\SW_K(s') > 2\SW_K(s)$ because the deviation is $2$-improving. 
Thus,
\begin{equation}
\label{eqn:Qs}
Q_{s'}(K, K) + Q_{s'}(K, \bar K) + Q_{s'}(K)> 2 \big(Q_{s}(K, K) + Q_s (K, \bar K) + Q_s(K)\big).
\end{equation}
The social welfare of $s$ can be written as
\[
\SW(s) = Q_s(K, K) + 2Q_s(K, \bar K) + Q_s(\bar K , \bar K) + Q_s(K) + Q_s(\bar K).
\]
Note that $Q_s(\bar K , \bar K) = Q_{s'} (\bar K , \bar K)$ and $Q_s(\bar K) = Q_{s'}(\bar K)$. Using \eqref{eqn:Qs}, we obtain 
\begin{align*}
\Phi(s') - \Phi(s) 
 & = Q_{s'}(K , K) + 2Q_{s'}(K , \bar K) + Q_{s'}(K) \\
 & \quad - Q_s(K, K) - 2Q_s(K , \bar K) - Q_s(K) \\
 & > Q_s(K , K) + Q_{s'}(K , \bar K) + Q_{s}(K) \ge 0.
\end{align*}
Thus $\Phi(s)$ is a $(2, k)$-generalized potential which concludes the proof. 
\qed
\end{proof}

The next theorem shows that in general our polymatrix coordination games do not have the $(\alpha, k)$-FIP for $\alpha < 2$.  

\begin{theorem}\label{thm:no-FIP}
For all $\alpha<2$ there is a polymatrix coordination game $\m G$ that has a cycle of $(\alpha, n-1)$-improving deviations.
\end{theorem}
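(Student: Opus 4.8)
The plan is to refute the $(\alpha,n-1)$-FIP directly, by exhibiting, for each fixed $\alpha<2$, a concrete polymatrix coordination game together with an explicit finite sequence of joint strategies $s^0 \betredge{K_1} s^1 \betredge{K_2} \cdots \betredge{K_T} s^T = s^0$ in which every step is an $(\alpha,n-1)$-improving deviation. Two structural observations delimit the search. First, by part~1 of Theorem~\ref{thm:ex_ne} no cycle can use only unilateral moves, so every coalition must have size at least $2$ and hence $n\ge 3$. Second, a cycle of length two is impossible: returning along the reverse deviation would force $p_i(s^0) > \alpha p_i(s^1) > \alpha^2 p_i(s^0)$ for some coalition member $i$, which fails for $\alpha \ge 1$. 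Thus the shortest possible cycle visits at least three distinct profiles, and I would look for a rotationally symmetric sequence of this kind.

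The design is guided by the proof of Theorem~\ref{thm:ex_ne}: since $\SW$ is a $(2,k)$-generalized potential, it strictly increases along every $(2,k)$-improving deviation, so a cycle can exist only once $\alpha$ drops below $2$, and the margin is tiny. Writing the welfare change of a deviation by $K$ (with excluded set $\bar K$) as $\SW(s') - \SW(s) = [\SW_K(s') - \SW_K(s)] + [Q_{s'}(K,\bar K) - Q_s(K,\bar K)]$, the only term that can make $\SW$ drop is the boundary term $Q(K,\bar K)$, i.e.\ the payoff on edges between the coalition and the players it leaves behind. As edge payoffs are symmetric, this boundary payoff is at most $\SW_{\bar K}(s)$, and a short calculation then shows that closing a cycle forces $\alpha<2$. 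This pinpoints the mechanism I want to engineer: at each step the coalition of $n-1$ players should abandon a single excluded player that carries almost all of the boundary payoff --- a near-dominant ``hub'' --- so that the welfare the coalition gains internally is paid for almost entirely by the hub's loss, keeping $\SW$ essentially constant around the cycle while each deviator improves by a factor approaching~$2$.

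Concretely, I would realise each configuration so that the excluded hub receives its payoff through edges to the coalition that the deviation deactivates, while the coalition members simultaneously activate a second set of edges among themselves that were present but paying zero beforehand; crucially this uses the full freedom of polymatrix (rather than plain coordination) payoffs, which lets me reward a specific non-matching pair of strategies on an edge and thereby decouple ``being attached to the hub'' from ``coordinating with the other spokes.'' I would assign payoffs by first fixing, for every player and every configuration, a target payoff satisfying the per-step improvement inequalities $p_i(s^{t+1}) > \alpha\, p_i(s^t)$ together with welfare-invariance around the cycle, and then solving for non-negative individual preferences $q^i$ and bimatrix payoffs $q^{ij}$ that reproduce these targets. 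To push the common improvement factor arbitrarily close to $2$ the construction has to approach this star limit, so I expect its size to grow as $\alpha \uparrow 2$.

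The main obstacle is closing the cycle. Producing a single $(\alpha,n-1)$-improving step with factor near $2$ is easy (the hub-abandoning move already achieves it), but chaining such steps into a sequence that returns exactly to $s^0$ forces the improvement inequalities of all coalition members at all steps to hold simultaneously, and these constraints become mutually tight precisely as $\alpha \to 2$ --- exactly the threshold dictated by Theorem~\ref{thm:ex_ne}. The delicate point is therefore to choose the rotation of hubs and the per-configuration targets so that this system of strict inequalities stays feasible with non-negative payoffs for every $\alpha<2$; verifying feasibility, and that each prescribed move is a genuine deviation in which every member changes strategy, is where the real work lies.
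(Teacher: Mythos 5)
Your structural analysis is sound, and your architecture --- a rotating ``hub'' that the coalition of the remaining $n-1$ players abandons at each step, with the improvement constraints tightening as $\alpha \uparrow 2$ and the game size growing accordingly --- is exactly the skeleton of the paper's construction. But the proposal stops precisely where the theorem lives. The statement is an existence claim, and you never exhibit the game: no payoffs are specified, the system of strict inequalities (each $p_i(s^{t+1}) > \alpha\, p_i(s^t)$ for all coalition members at all steps, with non-negative $q^i$ and $q^{ij}$, and with every coalition member genuinely changing strategy) is never shown to be feasible, and you say yourself that verifying feasibility ``is where the real work lies.'' A plan whose decisive step is acknowledged but not carried out is a genuine gap, not a proof; in particular, nothing in your argument rules out the possibility that the mutual tightness you observe near $\alpha = 2$ already makes the system infeasible for some $\alpha$ strictly below $2$.

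The device the paper uses to close exactly this gap is a concrete payoff scheme with \emph{geometric scaling}. Each player's strategy set is the player set itself (``choose someone to support''); on edge $\{i,j\}$ both endpoints receive $2^{i \ominus j}$ if both support $j$, and $2^{j \ominus i}$ if both support $i$ (indices in cyclic order), and $0$ otherwise. In the profile $s^{i}$ every player supports the hub $i$ except player $i \oplus 1$, who supports himself, and the cycle rotates the hub one position, the deviating coalition being everyone except the old hub. Because the weights are powers of $2$ around the cyclic order, the rotation multiplies every deviator's payoff by exactly $2$, with two exceptions: the incoming hub, whose ratio is $(2^{n-1}-2)/2^{n-2} = 2 - 2^{3-n}$, and the previously isolated player, who goes from payoff $0$ to a positive payoff; the excluded old hub absorbs the entire loss, so social welfare is constant around the cycle by symmetry, just as you intended. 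Hence every step is $\alpha$-improving for any $\alpha < 2 - 2^{3-n}$, and taking $n$ large enough handles every $\alpha < 2$. This confirms your prediction that the instance size must grow as $\alpha \uparrow 2$, but it is the exponential weights --- the one idea absent from your sketch --- that make your ``target payoff'' system solvable at all.
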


We derive some more refined insights for the special case of graph coordination games. 

\begin{theorem}\label{thm:graph-coord-exist}
The following holds for graph coordination games:
\label{thm:ex_tree}
\label{thm:graph_without_2eq}
\begin{enumerate}
\item Let $\m G$ be a graph coordination game on a tree. Then $\m G$ has a strong equilibrium.
\item There is a graph coordination game $\m G$ on a graph with one cycle such that no \equilibrium{\alpha}{k} exists for every $\alpha < \varphi$ and $k \ge 2$, where $\varphi := \frac{1}{2}(1+\sqrt 5) \approx 1.62$ is the golden ratio. 
\end{enumerate}
\end{theorem}

Note that Theorem~\ref{thm:graph-coord-exist} shows that for $k \ge 2$ a $k$-equilibrium may not exist. In contrast, Nash equilibria always exist by Theorem~\ref{thm:ex_ne}. Further, the graph used to show the second claim is a pseudoforest\footnote{A graph is a \emph{pseudoforest} if each of its connected components has at most one cycle.}. 
For graph coordination games with unit edge weights, this guarantees the existence of a strong equilibrium \cite{us:coord}. 

\section{Inefficiency}
\label{sec:inefficiency}

We analyze the price of anarchy of our polymatrix coordination games. The upper bound in the special case of $(\alpha,k) = (1,n)$ follows from a result in \cite{bach:strong_poa}. 

\begin{theorem}
\label{thm:approx_poa}
The $(\alpha, k)$-price of anarchy in polymatrix coordination games \wip\ is between $2\alpha(n-1)/(k-1) + 1 - 2\alpha$ and $2\alpha(n-1)/(k-1)$. 
The upper bound of $2 \alpha$ is tight for $\alpha$-approximate strong equilibria. 
\end{theorem}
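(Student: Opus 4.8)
The plan is to prove the upper and lower bounds separately; the tightness of $2\alpha$ for strong equilibria ($k=n$) then falls out of the lower-bound construction specialised to $k=n$. For the \textbf{upper bound}, fix an \equilibrium{\alpha}{k} $s$ of minimum social welfare and a social optimum $s^*$. Following a ``local then global'' strategy, I would first establish, for every \emph{single} coalition $K$ with $|K|=k$, the local inequality $g^K(s^*) \le \alpha\,\SW_K(s)$, where
\[
g^K(s^*) := \sum_{i \in K} q^i(s^*_i) + \sum_{\set{i,j}\in E,\; i,j\in K} q^{ij}(s^*).
\]
To get this I run a greedy ``peeling'' of $K$: starting from $T:=K$, I repeatedly pick a player $i \in T$ that does \emph{not} $\alpha$-improve under the deviation $(s^*_T, s_{-T})$ — such a player exists since $|T| \le k$ and $s$ is an \equilibrium{\alpha}{k} — record $p_i(s^*_T, s_{-T}) \le \alpha\,p_i(s)$, and remove $i$ from $T$. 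As $p_i(s^*_T, s_{-T}) \ge q^i(s^*_i) + \sum_{j \in N_i \cap T} q^{ij}(s^*)$ by non-negativity of all payoffs, and each edge inside $K$ is charged exactly once (to the endpoint removed first), summing the recorded inequalities over the peeling order yields the displayed bound. A minor technical point is that a deviation must change \emph{every} member's strategy; I dispatch players with $s^*_i = s_i$ by deviating only the sub-coalition that actually moves (of size $\le k$) and, once no mover remains, bounding the leftover contributions directly by $p_i(s)$.

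The \textbf{global step} is the counting/sandwich argument. Summing $g^K(s^*) \le \alpha\,\SW_K(s)$ over all $\binom{n}{k}$ coalitions and collecting multiplicities — each vertex lies in $\binom{n-1}{k-1}$ coalitions and each edge in $\binom{n-2}{k-2}$ — gives $\binom{n-1}{k-1} Q_{s^*}(N) + \binom{n-2}{k-2} E_{s^*} \le \alpha\binom{n-1}{k-1}\SW(s)$, where $E_{s^*} := \sum_{\set{i,j}\in E} q^{ij}(s^*)$. Dividing by $\binom{n-1}{k-1}$ and using $\binom{n-2}{k-2}/\binom{n-1}{k-1} = (k-1)/(n-1)$ leaves $Q_{s^*}(N) + \tfrac{k-1}{n-1}E_{s^*} \le \alpha\,\SW(s)$. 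Since $\SW(s^*) = Q_{s^*}(N) + 2E_{s^*}$ and $\tfrac{k-1}{2(n-1)} \le 1$, non-negativity of $Q_{s^*}(N)$ (the ``sandwich'') gives that the left side is at least $\tfrac{k-1}{2(n-1)}\SW(s^*)$, which rearranges to $\SW(s^*)/\SW(s) \le 2\alpha(n-1)/(k-1)$.

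For the \textbf{lower bound} I would exhibit a graph coordination game on a star: a centre $0$ joined to leaves $1,\dots,n-1$ by edges of weight $w$ (paid to both endpoints when they agree), where each player may choose a common colour $A$ or a private colour, the centre has individual preference $q^0(B_0) = w(k-1)/\alpha$, and all other individual preferences are $0$. The all-private profile $s$ is an \equilibrium{\alpha}{k}: any coalition avoiding the centre induces no active edge, while any coalition containing the centre and $m \le k-1$ leaves lets the centre gain only $mw \le w(k-1) = \alpha\,p_0(s)$, so the centre never \emph{strictly} $\alpha$-improves and hence blocks every admissible deviation. Since $\SW(s) = w(k-1)/\alpha$ and the all-$A$ optimum has $\SW(s^*) = 2(n-1)w$, this yields $\poa \ge 2\alpha(n-1)/(k-1)$, which in particular gives the claimed bound $2\alpha(n-1)/(k-1)+1-2\alpha$ (as $1-2\alpha \le 0$) and, at $k=n$, equals $2\alpha$, establishing the tightness claim for strong equilibria.

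The main obstacle I expect is the global counting step: getting the binomial multiplicities and the factor $(k-1)/(n-1)$ exactly right and recognising that non-negativity of the individual-preference term $Q_{s^*}(N)$ is precisely what supplies the extra factor $2$ in the final bound. The peeling is routine once the ``every member must move'' subtlety is handled, and the star is elementary to verify; the delicate arithmetic lives entirely in aggregating the $\binom{n}{k}$ local inequalities.
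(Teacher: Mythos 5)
Your upper bound is essentially the paper's own proof: the same peeling of a fixed size-$k$ coalition (the paper renames players and records $p_{i_x}^{\set{i_1,\dots,i_x}}(s^*)\le \alpha p_{i_x}(s)$ for $x=k,k-1,\dots,1$, which charges each internal edge exactly once to its first-removed endpoint), followed by the same aggregation over all $\binom{n}{k}$ coalitions with vertex multiplicity $\binom{n-1}{k-1}$ and edge multiplicity $\binom{n-2}{k-2}$, and the same use of non-negativity of the $q^i$ terms to absorb the factor $2$. One genuine improvement on your side: you explicitly handle the fact that, by the paper's definition, a deviation of $K$ requires \emph{every} member to change strategy, so the equilibrium condition cannot be applied verbatim to $(s^*_K,s_{-K})$ when $s^*_i=s_i$ for some $i\in K$; the paper silently glosses over this, and your fix (apply the equilibrium property to the sub-coalition of actual movers, and bound non-movers' recorded terms directly by $p_i(s)$) is exactly what is needed.

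The lower bound is where you genuinely diverge, and your construction is in fact stronger than the paper's. The paper stays inside graph coordination games (no individual preferences): a clique $V_1$ of size $k$ with weight-$1$ edges, completely joined by weight-$\alpha$ edges to $V_2$ of size $n-k$, with two group colors and one common color; this gives $\poa \ge 2\alpha(n-1)/(k-1)+1-2\alpha$, and a separate four-node path is used to get the $2\alpha$ tightness for $\alpha$-approximate strong equilibria. You instead exploit individual preferences, which the theorem's class permits: tuning the centre's outside option to $q^0(B_0)=w(k-1)/\alpha$ makes the all-private profile an $(\alpha,k)$-equilibrium of welfare exactly $w(k-1)/\alpha$ against an optimum of $2(n-1)w$, so $\poa \ge 2\alpha(n-1)/(k-1)$ --- matching the upper bound \emph{exactly}, hence subsuming both the paper's additive-gap lower bound and, at $k=n$, the strong-equilibrium tightness claim with a single example. (Your equilibrium verification is sound: leaf-only coalitions gain nothing since the centre keeps its private color, and any coalition within the size budget that contains the centre has $m\le k-1$ leaves, so the centre's new payoff $mw\le w(k-1)=\alpha p_0(s)$ is never a strict $\alpha$-improvement, which blocks the deviation.) What the paper's construction buys is that its bound holds already for the more restricted class of graph coordination games, which is the class featured in its hardness and existence results; what yours buys is a tight characterization, $\poa = 2\alpha(n-1)/(k-1)$, for the full class of polymatrix coordination games \wip
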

\begin{proof}[upper bound]
Let $s$ be an $(\alpha, k)$-equilibrium (which we assume to exist) and let $s^*$ be a social optimum. Fix an arbitrary coalition $K = \set{i_1, \dots, i_k}$ of size $k$. Then there is a player $i \in K$ such that $p_i(s^*_{K}, s_{-K}) \leq \alpha p_i(s)$. Denote by $p_i^K(s^*) := q^i(s^*) + \sum_{j \in N_i \cap K} q^{ij}(s^*)$ the total payoff that $i$ gets from players in $K$ under $s^*$ (including himself). Because all payoffs are non-negative, we have 
\begin{align}
p_i^K(s^*) 
& \leq q^i(s^*_i) + \sum_{j \in N_i \cap K} q^{ij} (s^*_i, s^*_j) + \sum_{j \in N_i \cap \bar K} q^{ij}(s^*_i, s_j) = p_i(s^*_K, s_{-K}). \label{eq:ineff}
\end{align}
Thus, $p_i^K(s^*) \leq \alpha p_i(s)$. Rename the nodes in $K$ such that $i_k = i$ and repeat the arguments above with $K \setminus \set{i_k}$ instead of $K$. Continuing this way, we obtain that for every player $i_x \in K$, $p_{i_x}^{\set{i_1, \dots, i_x}}(s^*) \le \alpha p_{i_x}(s)$. 

We thus have 
\begin{align*}
\sum_{i \in K} \Big( q^i(s^*) + \frac{1}{2} \sum_{j \in N_i \cap K} q^{ij}(s^*) \Big)
= \sum_{x = 1}^k p_{i_x}^{\set{i_1, \dots, i_x}}(s^*) 
\le \alpha \sum_{i \in K} p_{i}(s) 
\end{align*}

Summing over all coalitions $K$ of size $k$, we obtain
\begin{align}
\label{eqn:SW_a2}
\sum_{K: |K| = k} \Big(\sum_{i \in K} \Big( q^i(s^*) + \frac{1}{2} \sum_{j \in N_i \cap K} q^{ij}(s^*) \Big) \Big)
\le \alpha \sum_{K: |K| = k} \sum_{i \in K} p_{i}(s).
\end{align}

Consider the right-hand side of \eqref{eqn:SW_a2}. Note that every player $i \in N$ occurs in $n-1 \choose k-1$ many coalitions of size $k$ because we can choose $k-1$ out of $n-1$ remaining players to form a coalition of size $k$ containing $i$. Thus 
\begin{align}
\sum_{K: |K| = k} \sum_{i \in K} p_{i}(s) &
= {n-1 \choose k-1} \sum_{i \in N} p_i(s) = {n-1 \choose k-1} \SW(s). \label{eq:ineff1}
\end{align}

Similarly, the first term of the left-hand side of \eqref{eqn:SW_a2} yields
\[
\sum_{K: |K| = k} \sum_{i \in K} q^i(s^*) = {n-1 \choose k-1} \sum_{i \in N} q^i(s^*) \geq \frac{1}{2} {n-2 \choose k-2} \sum_{i \in N} q^i(s^*).
\]

Now, consider the second term of the left-hand side of \eqref{eqn:SW_a2}. Every pair $(i, j)$ with $i \in N$ and $j \in N_i$ occurs in ${n-2 \choose k-2}$ many coalitions of size $k$ because we can choose $k-2$ out of $n-2$ remaining players to complete a coalition of size $k$ containing both $i$ and $j$. Thus for the left-hand side of \eqref{eqn:SW_a2} we obtain
\begin{align}
\sum_{K: |K| = k} & \Big(\sum_{i \in K} \Big( q^i(s^*) + \frac{1}{2} \sum_{j \in N_i \cap K} q^{ij}(s^*) \Big) \Big) \notag \\
&\geq \frac{1}{2} {n-2 \choose k-2} \Big ( \sum_{i \in N} q^i(s^*) + \sum_{i \in N}\sum_{j \in N_i} q^{ij}(s^*) \Big) = \frac{1}{2} {n-2 \choose k-2} \SW(s^*). \label{eq:ineff2}
\end{align}

Combining \eqref{eq:ineff1} and \eqref{eq:ineff2} with inequality \eqref{eqn:SW_a2}, we obtain that the $(\alpha, k)$-price of anarchy is at most $2 \alpha {n-1 \choose k-1}/{n-2 \choose k-2} = 2\alpha \frac{n-1}{k-1}$. 
\qed
\end{proof}

\section{Complexity}
\label{sec:complexity}

In this section, we study the complexity of various computational problems on graph coordination games.

\begin{theorem}\label{thm:verification}
Let $\m G$ be a graph coordination game. Given a joint strategy $s$, the problem of deciding whether $s$ is an \equilibrium{\alpha}{k}
\begin{enumerate}
\item is in $P$, if $k = O(1)$ or $k = n$;
\item is co-NP-complete for every fixed $\alpha$.
\end{enumerate}
\end{theorem}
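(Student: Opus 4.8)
The plan is to prove both parts by analyzing what it takes to certify or refute that a given joint strategy $s$ is an $(\alpha,k)$-equilibrium. Recall that $s$ fails to be an $(\alpha,k)$-equilibrium precisely when there exists a coalition $K$ with $|K| \le k$ and a deviation $s' = (s'_K, s_{-K})$ that is $(\alpha,k)$-improving, i.e. $p_i(s') > \alpha\, p_i(s)$ for every $i \in K$. The verification problem is therefore a $\forall$-statement (no improving deviation exists), so the natural complexity home for checking it is co-NP, with a witness for a \emph{no}-instance being a single improving coalitional deviation. This immediately suggests the shape of the proof: part (1) establishes polynomial-time verifiability in the two regimes where the search space is manageable, and part (2) shows that when $k$ is part of the input the refutation witness encodes an NP-hard combinatorial choice.

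For part (1), I would handle the two cases separately. When $k = O(1)$, I would simply enumerate all coalitions $K$ of size at most $k$ — there are $O(n^k)$ of these, which is polynomial for constant $k$ — and for each such $K$ decide whether an $\alpha$-improving deviation exists. The key sub-step is that for a \emph{fixed} coalition $K$, finding a best joint response $s'_K$ is itself tractable: in a graph coordination game a deviating player's payoff from a fixed neighbor is just the edge weight if colors match and $0$ otherwise, so one can check feasibility of simultaneously satisfying $p_i(s') > \alpha p_i(s)$ for all $i \in K$. For the case $k = n$ (the strong-equilibrium case), enumeration is no longer possible, so I expect the argument to exploit structural features of graph coordination games: one can decide the existence of an $\alpha$-improving deviation for the grand coalition by a more clever procedure (for instance, reasoning about connected monochromatic components and whether all involved players can be made to strictly improve by a factor $\alpha$ at once), yielding a polynomial algorithm without brute force over coalitions.

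For part (2), the plan is to show co-NP-completeness for every fixed $\alpha$ by reducing a known NP-hard problem to the complement (the existence of an $(\alpha,k)$-improving deviation). Membership in co-NP is clear from the discussion above: a \emph{no}-instance is certified by exhibiting one coalition $K$ with $|K|\le k$ together with the deviation $s'_K$, and checking that this deviation is $\alpha$-improving takes polynomial time. For hardness I would reduce from a problem such as \textsc{Independent Set}, \textsc{Clique}, or \textsc{Max-Cut}, constructing a graph coordination game and a strategy profile $s$ so that an $\alpha$-improving deviation of size at most $k$ exists if and only if the source instance is a \emph{yes}-instance. The gadget would use the edge weights to encode the constraints and tune the factor-$\alpha$ improvement threshold so that exactly the intended coalitions can jointly improve; since $\alpha$ is fixed, the edge weights (which are part of the input) must carry the burden of the reduction.

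The main obstacle I anticipate is the hardness reduction in part (2): I must design the game so that the \emph{simultaneous} factor-$\alpha$ improvement condition for \emph{all} members of the deviating coalition aligns exactly with the combinatorial structure of the source problem, and the coalition-size bound $k$ must be made to play the role of the solution-size parameter. Getting all these constraints to interlock — in particular ensuring no unintended small improving coalition exists that would make every instance a trivial \emph{no}-instance — is the delicate part. The polynomial-time claim for $k = n$ in part (1) is a secondary difficulty, since it cannot rely on enumeration and instead requires genuinely exploiting the coordination-game structure.
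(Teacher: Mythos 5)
Your high-level decomposition (co-NP membership via a deviation witness, enumeration for constant $k$, a reduction for hardness) matches the paper's, and your $k=O(1)$ case is essentially sound: for a constant-size coalition one can brute-force all joint color assignments of the coalition in polynomial time, so the vagueness about ``checking feasibility'' is harmless. However, there are two genuine gaps. The more serious one is the $k=n$ case, which you yourself flag as the hard part and then leave unresolved: you assert that ``a more clever procedure'' exploiting connected monochromatic components should exist, but you give no algorithm, and the claim does not follow from anything you wrote. The paper needs two concrete ideas here. First, a \emph{simplicity lemma}: if any $\alpha$-improving deviation exists, then one exists in which the coalition $K$ induces a connected subgraph and all members deviate to a \emph{common} color $x$ (proved by taking a deviator $v$ with target color $x=s'_v$ and restricting to the players reachable from $v$ in $G[K]$ that also move to $x$; since non-matching neighbors contribute zero payoff, freezing everyone else cannot hurt these players). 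Second, even granted simplicity, for each color $x$ one must decide whether there is a set $K$ of potential $x$-players such that every $v\in K$ has induced edge weight inside $K$ exceeding a threshold $d_v = \alpha p_v(s) - w(\{\{v,j\}\in E \mid s_j = x\}) - q^v(x)$. The paper solves this via a greedy deletion algorithm: start from all candidate nodes and repeatedly remove any node violating its threshold; correctness rests on the observation that feasible sets are closed under union, so a unique inclusionwise maximal feasible set exists and is exactly what the greedy procedure returns. Neither the lemma nor this maximal-set argument (nor any substitute) appears in your proposal, so part (1) for $k=n$ is unproven.

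The second gap is in part (2): naming \textsc{Clique} (or \textsc{Independent Set}, \textsc{Max-Cut}) as a candidate source problem and describing what the gadget ought to achieve is not a reduction. The paper's construction is short but carries all the content: from a \textsc{Clique} instance $(G,k)$, give each $v\in V$ the strategies $\{x_v, y\}$ with the $x_v$ pairwise distinct, attach to each $v$ a pendant node that can only play $x_v$ via an edge of weight $k-2$, give all edges of $G$ weight $\alpha$, and take $s$ to be the profile where every $v$ plays $x_v$ (payoff $k-2$). Then any $\alpha$-improving coalition of size at most $k$ must move entirely to the shared color $y$, and a member improves by a factor greater than $\alpha$ precisely when it has at least $k-1$ coalition neighbors, which forces the coalition to be a $k$-clique; conversely a $k$-clique deviating to $y$ gets $\alpha(k-1) > \alpha(k-2)$. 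Your proposal correctly anticipates the interlocking-constraints difficulty (and that the edge weights must do the work since $\alpha$ is fixed), but without an actual gadget the hardness claim remains open.
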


\begin{proof}[sketch]
We sketch the proof of the first claim for $k = n$. A crucial insight is that if there is an $\alpha$-improving deviation from $s$ then there is one which is \emph{simple}, i.e., $s' = (s'_K, s_{-K})$ where the subgraph $G[K]$ induced by $K$ is connected and all nodes in $K$ deviate to the same color $s'_K = x$ for some $x$ (see Lemma~\ref{lem:simple_ext}). 

Fix some color $x$ and let $G_x := (N_x, E_x)$ be the subgraph of $G$ induced by the set of nodes $N_x$ that can choose color $x$ but do not do so in $s$. For each $u \in N_x$ define $d_u :=  \alpha p_u(s) - w(\sset{\set{u,v} \in E}{s_j = x}) - q^u(x)$. 
Now, a deviation of a coalition $K \su N_x$ to $(x_K, s_{-K})$ is $\alpha$-improving if and only if for every node $u \in K$ the total weight of all incident edges in the induced subgraph $G_x[K]$ is larger than $d_u$. We prove that an inclusionwise maximal $K \su N_x$ satisfying this property can be found in polynomial time. This way we can verify for every color $x$ whether an $\alpha$-improving deviation exists. 
\qed
\end{proof}

Deciding whether a graph coordination game admits a $k$-equilibrium is hard for every $k \ge 2$. Note that for unit edge weights $2$-equilibria are guaranteed to exist and can be found efficiently, as shown in \cite{us:coord}.

\begin{theorem}\label{thm:decide-existence}\label{thm:computation_2eq}
Let $\m G$ be a graph coordination game. Then the problem of deciding whether there is a $k$-equilibrium is NP-complete for every fixed $k \ge 2$. 
\end{theorem}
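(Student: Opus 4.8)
The plan is to prove NP-completeness of deciding the existence of a $k$-equilibrium in graph coordination games, for every fixed $k \ge 2$. Membership in NP is the easy direction: a witness is a joint strategy $s$, and by Theorem~\ref{thm:verification}(1) verifying that $s$ is a $k$-equilibrium takes polynomial time when $k = O(1)$, so in particular for fixed $k$. Thus the bulk of the work is the hardness reduction, and I would first establish it for $k = 2$ and then explain how to lift it to arbitrary fixed $k \ge 2$.

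For the core reduction I would reduce from a suitable NP-complete problem whose combinatorial structure matches the ``no stable coloring'' obstruction. The natural candidate is a satisfiability-style or graph-coloring-style problem; I would aim to reduce from a variant such as \textbf{3-SAT} or \textbf{positive/monotone not-all-equal SAT}, or alternatively from a graph problem like \emph{(non-)existence of a proper coloring under constraints}. The key design goal is a gadget construction: given an instance $I$, build a weighted graph coordination game $\m G_I$ and argue that $\m G_I$ admits a $2$-equilibrium if and only if $I$ is a NO-instance (or YES-instance, depending on polarity). The engine driving the reduction is Theorem~\ref{thm:graph-coord-exist}(2): it already exhibits a single pseudotree gadget (a tree plus one cycle) on which \emph{no} $2$-equilibrium exists because the cyclic payoff dependencies force an infinite $\alpha$-improving cycle of deviations for $\alpha < \varphi$, in particular for $\alpha = 1$. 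I would use copies of this ``unstable cycle'' gadget as the fundamental building block: the instance $I$ would control, through the edge weights and the available color sets $S_i$, whether the unstable cycle can be ``broken'' or ``satisfied'' by a consistent global coloring. Concretely, a satisfying assignment to $I$ should correspond to a way of choosing colors that defuses every unstable gadget simultaneously, yielding a genuine $2$-equilibrium, whereas an unsatisfiable $I$ leaves at least one gadget perpetually unstable, precluding any $2$-equilibrium.

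Once the $k = 2$ case is settled, I would extend to fixed $k \ge 2$ by \emph{padding} the construction so that the only profitable coalitional deviations remain of size exactly $2$ (or at most $2$) even when larger coalitions are permitted. One clean way to achieve this is to attach to the instance a disjoint component (or sparsify the gadget) so that no coalition of size between $3$ and $k$ can find a joint improving deviation that a pair could not already trigger; this decouples the equilibrium condition from $k$ and makes the $k$-equilibrium existence question coincide with the $2$-equilibrium question. Alternatively, one can blow up each node of the $k=2$ gadget into a tightly-coupled clique-like cluster of size forcing coalitions to act in groups of $k$, so that the same cyclic obstruction reappears one level up. I would argue the correctness of whichever padding is chosen by checking that the characterization ``$\m G_I$ has a $k$-equilibrium iff $I$ is satisfiable'' is preserved.

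The main obstacle I anticipate is the gadget design in the hardness direction: I must ensure that the \emph{only} source of instability in $\m G_I$ is the one encoding the logical constraints of $I$, so that spurious improving $2$-deviations elsewhere do not accidentally destroy all equilibria regardless of whether $I$ is satisfiable. This requires carefully tuning edge weights (exploiting that graph coordination games allow arbitrary non-negative weights, unlike the unit-weight case of~\cite{us:coord} where $2$-equilibria always exist) and restricting each player's available color set $S_i$ so that local best responses are forced into the intended pattern. The delicate point is the forward implication: from a $2$-equilibrium of $\m G_I$ I must \emph{extract} a satisfying assignment, which means every equilibrium coloring must respect the encoded constraints, not merely some of them. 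Getting this ``completeness of the encoding'' right — showing there are no unintended equilibria that fail to correspond to assignments — is where the technical care concentrates.
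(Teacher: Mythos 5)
Your NP-membership argument is fine and matches the paper's (a joint strategy is a certificate, and verification is polynomial for fixed $k$ by Theorem~\ref{thm:verification}). The genuine gap is everything else: what you give for the hardness direction is a research plan, not a proof. You do not commit to a source problem (you list 3-SAT, NAE-SAT, or a coloring variant as candidates), you do not construct any gadget, you fix no weights or color sets, and you prove neither direction of the claimed equivalence; indeed you explicitly flag the gadget design and the ``no unintended equilibria'' argument as the open obstacles. That is exactly where the entire difficulty of this theorem lives, so the proposal cannot be credited as a proof. For comparison, the paper's reduction is from \textsc{Minimum Maximal Matching}, a choice that makes the correspondence nearly automatic: each node $v$ of the given graph $G=(V,E)$ may choose a color $y_e$ of an incident edge (all edges of $E$ get weight $4$) or a ``gadget color'', so the unicolor edges of any joint strategy form a matching, and stability against pair deviations among $V$-nodes is \emph{literally} maximality of that matching. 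The size bound $|M|\le l$ is enforced by attaching $n-2l$ copies of an unstable triangle gadget $H_i$ (the same flavor of cyclic instability as the golden-ratio gadget of Theorem~\ref{thm:graph-coord-exist} that you propose to reuse, so your intuition about the mechanism is sound), each of which is stable only if some node of $V$ ``spends'' itself on a unicolor edge into that gadget; hence at least $n-2l$ nodes of $V$ are consumed by gadgets and at most $2l$ remain to be matched. Both directions of the reduction then follow by short local arguments. A SAT-style encoding would have to rebuild all of this machinery from scratch, and you have not done so.

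The extension to arbitrary fixed $k\ge 2$ is also not established by your sketch. Your first padding idea (attach a disjoint component) does nothing to prevent coalitions of size $3,\dots,k$ \emph{inside} the $k=2$ construction from having improving deviations, and the clique-blow-up idea is left entirely unverified. The paper needs no modification of the construction at all: by Lemma~\ref{lem:simple_ext} any improving coalitional deviation can be assumed \emph{simple} (the coalition is connected in $G$ and all members switch to one common color), and the color sets of the construction are engineered so that every color is available to at most two nodes that are actually able to change to it (the gadget node $u^i$ has a singleton strategy set and so can never deviate). Hence every simple improving coalition has size at most $2$, so $2$-equilibria coincide with strong equilibria in the constructed game, and one reduction simultaneously yields NP-hardness for every fixed $k\ge 2$ and for strong equilibria. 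If you want to salvage your approach, this invariant---no color is a feasible joint target for more than two connected deviating nodes---is the concrete property your gadgets must satisfy, and it must be proved, not assumed.
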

\begin{proof}[$k = 2$]
We give a reduction from \textsc{minimum maximal matching} which is known to be NP-complete  \cite{yanna:edge_dominating}: Given a graph $G = (V, E)$ and a number $l$, does there exist an inclusionwise maximal matching of size at most $l$? 

Let $(G, l)$ be an instance of this problem with $G = (V, E)$ and $n = |V|$. 
We add  $n-2l$ gadgets $H_1, \ldots, H_{n-2l}$ to $G$, where an illustration of gadget $H_i$ is given in Figure~\ref{fig:gadget}. The dashed edge from $v_0^i$ to $G$ indicates that $v_0^i$ is connected to all vertices in $G$ and each of these edges has weight 3. 
We assign to each node $v \in V$ the color set $S_v = \{x^i_v \mid i = 1, \ldots, n-2l\} \cup \{y_e \mid e = \{v,w\} \in E\}$, i.e., $v$ can either choose a `gadget color' $x^i_v$ or a color corresponding to some adjacent edge in $E$. Every edge in $E$ has weight $4$. 
Note that for all joint strategies of nodes in $V$ the set of unicolor edges in $E$ constitutes a matching. 
The idea is that in every 2-equilibrium $n-2l$ nodes in $V$ are needed to `stabilize' the gadgets and the $2l$ remaining nodes in $V$ form a maximal matching.

\begin{figure}[t]
\begin{center}
  \begin{tikzpicture}[scale = 1.5] 
  	\draw [densely dotted] (-.5,.35) ellipse (2.5cm and 1.2cm);
  	\node (G) at (0,2) {$G$};
    \node [label=left:{\lf $\{a,c\} \cup \{x_v^i \mid v \in V\}$}] (v_0) at (0, 1) {$v^i_0$};
    \node [label=below:{\lf $\{a,b\}$}] (v_1) at (-1, 0) {$v^i_1$};
    \node [label=below:{\lf $\{b,c\}$}] (v_2) at (1, 0) {$v^i_2$};
    \node [label=below:{\lf $\{b\}$}] (u) at (-2, 0) {$u^i$}; 
	\node (H) at (2.25,.15) {$H_i$};
	
	\draw[dashed] (G) -- node[right,pos=.25]{$3$} (v_0);
    \draw (v_1) -- node[auto]{$4$} (v_0);                                                       
    \draw (v_0) -- node[auto]{$3$} (v_2);                                                       
    \draw (v_1) -- node[below]{$2$} (v_2);                                                       
    \draw (u) -- node[auto]{$3$} (v_1);                                                       
\end{tikzpicture}
\caption{The gadget $H_i$.}
\label{fig:gadget}
\end{center}
\vspace*{-5mm}
\end{figure}
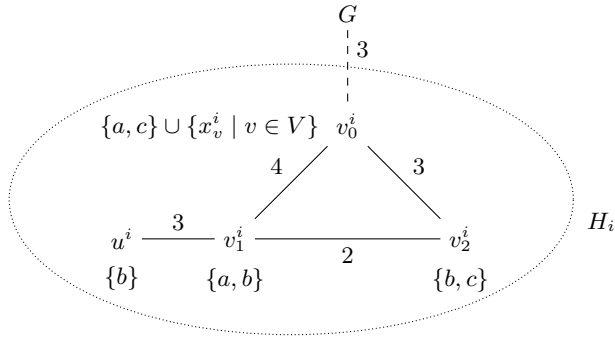

Assume that $G$ has a maximal matching $M \su E$ with $|M| \le l$. We construct a 2-equilibrium $s$. For every matched node $v \in V(M)$, choose the color corresponding to the adjacent matching edge. On the unmatched nodes in $V$ and nodes of the form $v^i_0$, we assign colors in such a way that every gadget has one outgoing edge (indicated by the dashed edge) that is unicolor. This is possible because there are at least $n-2l$ unmatched nodes in $V$. If there are uncolored nodes in $V$ left, assign arbitrary colors to them. Finally, let $v_1^i$ and $v_2^i$ choose color $b$ for every $i$. 
We claim that $s$ is a 2-equilibrium: The matched nodes obtain a payoff of 4, which is the maximal payoff nodes in $V$ can get; so they are not part of any improving deviation. Let $v \in V$ be unmatched. Then $v$ cannot deviate together with another unmatched node to increase the payoff because $M$ is maximal. Further, all gadget nodes are `taken': every $v_0^i$ has a payoff of $3$, which a joint deviation with $v$ cannot increase. This implies that $v$ cannot be part of any improving deviation. Lastly, it is easy to see that pairs of gadget nodes cannot profitably deviate. This proves that $s$ is a 2-equilibrium.

Conversely, assume that a joint strategy $s$ is a 2-equilibrium. Let $M$ consist of the unicolor edges in $G$. By the choice of the color assignment, $M$ is a matching. $M$ is maximal because if there were two unmatched adjacent nodes, then they could form a profitable deviating coalition. It remains to show that $|M| \leq l$. It is not hard to see that if there is a gadget without an outgoing unicolor edge, then there is a $2$-improving deviation in $H_i$. So at least $n-2l$ nodes choose gadget colors, implying that $|V(M)| \leq 2l$ and thus $|M| \leq l.$
\qed
\end{proof}

On the positive side, we can compute a strong equilibrium in polynomial time if the underlying graph is a tree. 

\begin{theorem}
\label{thm:SE_forest} 
Let $\m G$ be a graph coordination game on a tree. Then there is a polynomial-time algorithm to compute a strong equilibrium. 
\end{theorem}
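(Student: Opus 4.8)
The plan is to root the tree at an arbitrary vertex (handling each component of a forest separately, since distinct components never interact) and to exploit the \emph{sequential-move} version of the game in which vertices commit to colors in top-down order, each vertex observing the color chosen by its parent. I would first compute a subgame perfect equilibrium (SPE) of this sequential game by backward induction, and then prove that the resulting profile is a strong equilibrium of the original simultaneous-move game, which is exactly the correspondence we need.

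For the computation I process the tree from the leaves towards the root. For each vertex $v$ I maintain a table $A_v[\cdot]$ where $A_v[c] := q^v(c) + \sum_{u} w_{vu}\,[\,\mathrm{BR}_u(c)=c\,]$ is the payoff $v$ collects from its own preference together with its children $u$, assuming each child plays its best response $\mathrm{BR}_u(c)$ to $v$ committing to color $c$ (with the child's subtree already resolved). The best response of $v$ to a parent color $\gamma$ is then $\mathrm{BR}_v(\gamma) := \arg\max_{c}\bigl(A_v[c] + w_{\mathrm{par}(v),v}\,[\,c=\gamma\,]\bigr)$, and the root maximises $A_{\mathrm{root}}[\cdot]$. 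A single top-down pass instantiates the colors. Each table entry costs $O(\deg(v)\cdot|S_v|)$ work, so the whole procedure is polynomial; this is the easy part.

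The substance is the claim that the computed profile $s$ is a strong equilibrium. I would argue by contradiction: suppose a coalition $K$ admits a deviation $s'$ in which every member strictly improves, and let $v\in K$ have minimum depth. Its parent lies outside $K$, so its color is unchanged and $v$'s SPE payoff equals the best-response value $\beta(v) := \max_c\bigl(A_v[c] + w_{\mathrm{par}(v),v}[c=s_{\mathrm{par}(v)}]\bigr)$. Writing $c := s'_v$, optimality of $A_v$ gives $p_v(s') - \beta(v) \le \sum_u w_{vu}\bigl([s'_u=c]-[\mathrm{BR}_u(c)=c]\bigr)$, so $v$ improves only if some child $u$ plays $c$ although $\mathrm{BR}_u(c)\ne c$, a \emph{sacrificial} child that forgoes its subgame-optimal choice to match $v$. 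The plan is to follow a descending path $v=w_0,w_1,w_2,\dots$ of such sacrificial children, all playing the same color $c$; one checks that each $w_t$ is genuinely a deviator (otherwise $\mathrm{BR}_{w_t}(c)=c$, contradicting sacrifice) and that each connecting edge carries positive weight.

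The crux is a monotonicity lemma along this chain: for every sacrificial node $w_t$ one has $\beta(w_t)\le p_{w_t}(s)$. Indeed $p_{w_t}(s)$ is $w_t$'s best response to its \emph{old} parent color while $\beta(w_t)$ is its best response to the \emph{new} color $c$; since the sacrifice forces this latter maximum to be attained off $c$, it reduces to the pure subtree value $\max_{c'\ne c}A_{w_t}[c']$, which $p_{w_t}(s)$ dominates by a short case split on where that maximum is attained. Combined with $p_{w_t}(s')-\beta(w_t)\le \sum_g w_{w_t g}([s'_g=c]-[\mathrm{BR}_g(c)=c])$, this shows that $w_t$ either has a further sacrificial child (so the chain extends) or satisfies $p_{w_t}(s')\le \beta(w_t)\le p_{w_t}(s)$ and hence does \emph{not} strictly improve. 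As the tree is finite the chain must terminate, at the latest at a leaf (which has no children), and there we obtain a member of $K$ that fails to improve, the desired contradiction. I expect this chain/monotonicity step to be the main obstacle: the naive induction asserting that the SPE restricted to a subtree is a strong equilibrium of that subgame does not compose, because when an ancestor recolors, the subgame's equilibrium baseline shifts and improvements measured against the global SPE no longer coincide with improvements against the subgame SPE; tracking the single color $c$ together with the quantity $\beta(\cdot)$ along a sacrificial path is precisely what circumvents this difficulty.
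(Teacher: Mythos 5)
Your proposal is correct, and its central construction is exactly the paper's: root the tree, compute a subgame perfect equilibrium of the induced sequential-move game by backward induction, and argue that the induced profile is a strong equilibrium. Where you genuinely diverge is in the verification argument. The paper first reduces to \emph{simple} deviations via its Lemma~\ref{lem:simple_ext} (the coalition $K$ may be assumed connected, with all members switching to one common color $x$), then proves a bottom-up claim over the subtree of the topmost coalition member -- any node playing $x$ in the deviation would also play $x$ in the SPE were its parent to play $x$ -- and finally concludes that the topmost member cannot improve. You dispense with the simple-deviation lemma altogether and instead follow a single descending path of ``sacrificial'' children, all playing the deviation color $c$, maintaining the invariant $\beta(w_t)\le p_{w_t}(\bar s)$; finiteness of the tree then produces a coalition member who does not strictly improve. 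I checked the two steps your sketch leaves implicit and both go through: (i) a non-deviating child already playing $c$ must satisfy $\mathrm{BR}_u(c)=c$ (if its parent's old color was $c$ this is immediate; otherwise $A_u[c]\ge A_u[c']$ for all $c'$, and the positive edge weight makes $c$ the \emph{strict} best response to $c$, so ties cannot spoil it -- this positivity is exactly why sacrificial edges must carry positive weight); (ii) the monotonicity $\beta(w_t)\le p_{w_t}(\bar s)$ follows since sacrifice forces $\beta(w_t)=\max_{c'\ne c}A_{w_t}[c']$, which is dominated by the SPE payoff against the old parent color by non-negativity of weights. In effect your chain is a path-following, contrapositive version of the paper's subtree induction: both hinge on the same SPE optimality inequalities and the same tie-breaking observation, but yours buys independence from the simplification lemma (it handles arbitrary, possibly disconnected coalitions directly), while the paper's reduction buys a cleaner setup in which all of $K$ plays one color. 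A side benefit of your bookkeeping is that individual preferences $q^v$ are folded directly into the tables $A_v[\cdot]$, avoiding the paper's dummy-node preprocessing.
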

\begin{proof}[sketch]
The idea is as follows: We fix an arbitrary root $r$ of the tree and consider the induced sequential-move game. This game has a subgame perfect equilibrium $s$ which can be computed in polynomial time by backwards induction. Let $\bar s$ be the corresponding joint strategy of $\m G$ if every player plays his best response according to $s$. We can prove that $\bar s$ is a strong equilibrium of $\m G$. 
\end{proof}

Example \ref{ex:polymatrix_tree} in the appendix shows that, unfortunately, the idea above does not extend to polymatrix coordination games.

\section{Coordination mechanisms}
\label{sec:coord-mech}

In this section, we investigate means that a central designer could use to reduce the inefficiency of Nash equilibria. 

In our games the common payoff $q^{ij}$ of the bimatrix game on edge $\set{i, j}\in E$ is distributed equally to both $i$ and $j$. An idea that arises is to use different \emph{payoff sharing rules} to reduce the inefficiency. 
Unfortunately, it turns out that the price of anarchy remains unbounded no matter which payoff sharing rule is used (details are given in the appendix). 

We therefore consider another natural approach. Suppose the central designer can impose strategies on a subset of the players to reduce the inefficiency. Let $\m G$ be a polymatrix coordination game \wip\ Further, let $K \su N$ be a subset of the players and fix a joint strategy $f_K \in S_K$ for players in $K$. We define $\m G[f_K]$ as the game with players from $N \setminus K$ that arises from $\m G$ if we fix the strategies of all players in $K$ according to $f_K$. We say that $f_K$ \emph{guarantees} social welfare $z$ if $\SW(f_K, s_{-K}) \ge z$ for all Nash equilibria $s_{-K}$ of $\m G[f_K]$. We also call $f_K$ a \emph{joint strategy of size $|K|$}. 

Suppose that $f_K$ guarantees social welfare $z$. Then once all players in $\m G[f_K]$ have reached a Nash equilibrium we can release all players in $K$ and let them play their best responses too. By Theorem~\ref{thm:ex_ne}, the social welfare can only increase subsequently. As a result, the final Nash equilibrium has social welfare at least $z$. So we can view $f_K$ as a `temporary advice' for the players in $K$. A similar idea has been put forward in \cite{bal:advertising}. 

We first show that determining the minimum number of players to guarantee a certain social welfare is hard, even for graph coordination games. 

\begin{theorem}\label{thm:strat-imp}
Let $\m G$ be a graph coordination game. Given a joint strategy strategy $s$, the problem of finding a minimal $k$ such that there is a joint strategy $f_K$ of size $k$ that guarantees social welfare $\SW(s)$ is NP-hard. The claim also holds if $f_K$ is restricted to be $s_K$.
\end{theorem}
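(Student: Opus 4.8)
The plan is to reduce from \textsc{dominating set}, which is NP-complete: given a graph $G' = (V', E')$ and an integer $\ell$, decide whether $G'$ has a dominating set of size at most $\ell$. The high-level idea is to build a graph coordination game whose players split into (i) one \emph{coordination gadget} $\Gamma_v$ per vertex $v \in V'$ that possesses both a ``good'' equilibrium of high social welfare and a ``bad'' equilibrium of low welfare, and (ii) one \emph{dominator} node $X_u$ per vertex $u \in V'$ whose strategy we may fix. I would wire $X_u$ to the gadgets $\Gamma_v$ with $u \in N_{G'}[v]$ so that pinning $X_u$ to the common color $c$ destroys the bad equilibrium of exactly these gadgets. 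The target strategy $s$ puts every gadget in its good state, so that $\SW(s)$ is attained precisely when no gadget is stuck; by the argument preceding the theorem (and Theorem~\ref{thm:ex_ne}) it then suffices to reason about the worst Nash equilibrium of the pinned game. The intended equivalence is that $f_K$ pinning $\{X_u : u \in D\}$ to $c$ guarantees $\SW(s)$ if and only if $D$ is a dominating set of $G'$, so that the minimum $k$ equals the minimum dominating-set size.

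Concretely, gadget $\Gamma_v$ consists of two nodes $a_v,b_v$ with strategy set $\{c,p_v\}$ joined by an edge of weight $W$, plus a reward node $h_v$ with the single strategy $c$ joined to $a_v$ by an edge of weight $R<W$. With $R<W$ the all-$p_v$ configuration is a (bad) equilibrium of welfare $2W$, whereas the all-$c$ configuration is a (good) equilibrium of welfare $2W+2R$; the gap $2R$ is the welfare a stuck gadget forfeits. Each dominator $X_u$ carries a private partner $Y_u$ joined by a heavy edge of weight $L$, so that $\{X_u,Y_u\}$ can coordinate on $c$ or on a private color at the same welfare $2L$; this makes the dominator welfare-neutral and, because $L$ dominates all trigger weights, prevents a non-pinned $X_u$ from ever switching to $c$ unilaterally, which is what blocks uncontrolled cascades. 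Finally I add trigger edges of small weight $\delta$ from each $X_u$ to $a_v$ for $u \in N_{G'}[v]$, with $\delta$ in the window $W-R<\delta$ (so a single pinned neighbor makes $a_v$ abandon its bad state and the gadget collapses to its good state) and $\delta n^2 < R$ (so the total welfare carried by trigger edges can never compensate a single forfeited gap $2R$). Both constraints hold simultaneously by taking $R$ just below $W$ and $\delta$ just above $W-R$.

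The two directions then read as follows. If $D$ dominates, every gadget has a pinned neighbor and is forced into its good state in every equilibrium, so the worst equilibrium already realizes at least $\SW(s)$ (up to nonnegative trigger contributions). If $D$ does not dominate, some $\Gamma_v$ has no pinned neighbor, its bad state survives as an equilibrium, and the welfare drops below $\SW(s)$ by at least $2R$ minus the $O(\delta n^2)$ trigger slack, hence strictly below by the choice of $\delta$. For the variant where $f_K$ is restricted to $s_K$, I take $s$ to place every $X_u$ (and $Y_u$) on $c$ as well; the only subtlety is that non-pinned dominators revert to their private color in the pinned game, so the trigger edges they carried in $s$ must be welfare-negligible, which is exactly the role of the $\delta n^2 < R$ budget together with the welfare-neutral partners $Y_u$.

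The main obstacle I expect is precisely this welfare bookkeeping: the construction must make ``$f_K$ guarantees $\SW(s)$'' coincide with ``$D$ dominates'' with no slack in either direction, and the two mechanisms that could spoil the equivalence---a non-pinned dominator spontaneously activating and rescuing gadgets for free (a cascade), and the trigger/partner edges shifting welfare enough to mask a forfeited gadget---pull the parameters in opposite directions. Balancing them through the weight hierarchy $R \approx W \gg \delta n^2$ and the symmetric partners $Y_u$ is the crux; the remaining technical step is to argue that the worst-case equilibrium of the pinned game is exactly the ``maximally stuck'' configuration, so that it suffices to evaluate that single configuration.
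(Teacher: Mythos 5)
There is a genuine gap, and it is exactly the point you flagged as ``the main obstacle'': the welfare bookkeeping for the trigger edges cannot be repaired by making $\delta$ small, because ``$f_K$ guarantees $\SW(s)$'' is an \emph{exact} threshold, not an approximate one. In your construction (with $s$ placing every $X_u,Y_u$ on $c$, as you need for the $s_K$-restricted variant), $\SW(s)$ includes the contribution $2\delta$ of every trigger edge. Now take any pinned set that leaves some dominator pair $(X_u,Y_u)$ free. Since $L$ dominates the total trigger weight, the configuration in which this pair coordinates on its private color $r_u$ is a Nash equilibrium of the pinned game no matter what the rest of the graph does; in that equilibrium all trigger edges incident to $X_u$ (there is at least one, namely $\{X_u,a_u\}$) are dead, so the social welfare is \emph{strictly} below $\SW(s)$ --- by a tiny but positive amount. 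Hence pinning a dominating set $D\subsetneq V'$ never guarantees $\SW(s)$; in fact the minimum $k$ in your constructed instance is always $|V'|$ (one pin per pair $\{X_u,Y_u\}$), completely independent of the domination number of $G'$. The equivalence ``$f_K$ guarantees $\SW(s)$ iff the pinned dominators dominate'' is therefore false in both directions of interest, and declaring the loss ``welfare-negligible'' does not help: any strictly positive loss breaks the guarantee. (If you instead define $s$ with the pairs on private colors, the first claim can be salvaged, but then pinning $s_{X_u}=r_u$ accomplishes nothing and the $s_K$-restricted variant collapses for the same reason.)

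The structural lesson, which is how the paper's proof avoids this trap, is that the target welfare $\SW(s)$ must be realized \emph{robustly}: it should not count any edge whose liveness depends on the behavior of unpinned auxiliary players. The paper reduces from \textsc{Minimum Vertex Cover}, subdividing each edge $e=\{u,v\}$ by a node $v_e$ with strategies $\{c_u,c_v,p\}$, while each original vertex $v$ has strategies $\{c_v,p_v\}$. The target welfare $2|E|$ is the \emph{maximum} welfare of the game and is attained exactly when every $v_e$ coordinates with \emph{some} endpoint; crucially, it is indifferent to which endpoint, and to what the non-pinned vertices play, since vertices earn only from edge-nodes. Pinning a vertex cover to its common colors then forces every $v_e$ to earn $1$ in every Nash equilibrium (so the guarantee holds with no slack), and conversely an uncovered edge yields a bad equilibrium in which $v_e$ earns $0$. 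If you want to keep your dominating-set framework, you would need to redesign the gadgets so that $\SW(s)$ contains no contribution from dominator-incident edges --- which is precisely what conflicts with the requirement that $s_K$-pinning of dominators be meaningful.
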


In light of the above hardness results, we resort to approximation algorithms. 

\begin{theorem}
\label{thm:forcing_sw}
Let $\m G$ be a polymatrix coordination game \wip\ Given a joint strategy $s$ and a number $k$, we can find in polynomial time a coalition $K$ of size $k$ such that $s_K$ guarantees welfare $\frac{k}{n} \SW(s)$ and this is tight. 
\end{theorem}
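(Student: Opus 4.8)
The plan is to prove the positive statement by a \emph{greedy selection combined with a smoothness-style deviation bound}, and to prove tightness by an explicit symmetric instance. For the algorithm, for each player $j$ let $b_j := p_j(s) - q^j(s_j) = \sum_{i \in N_j} q^{ij}(s)$ denote the payoff that $j$ draws from its incident edges under $s$. I would simply take $K$ to be a set of $k$ players with the largest values $b_j$, which is clearly computable in polynomial time. Since $\m G[s_K]$ is again a polymatrix coordination game \wip, Theorem~\ref{thm:ex_ne} guarantees that it possesses Nash equilibria, so the guarantee statement is non-vacuous.

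The heart of the positive direction is a pointwise deviation bound. Let $t$ be any Nash equilibrium of $\m G[s_K]$ and write $s' := (s_K, t)$. I would establish
\[
p_i(s') \ge q^i(s_i) + \sum_{j \in N_i \cap K} q^{ij}(s) \qquad \text{for every } i \in N.
\]
For $i \in K$ this is immediate since $s'_i = s_i$, the neighbours in $K$ also play their $s$-strategies, and all payoffs are non-negative. For $i \notin K$ it follows from the Nash condition applied to the deviation in which $i$ switches to $s_i$: player $i$ cannot gain, and dropping the non-negative contributions of neighbours outside $K$ leaves exactly the right-hand side. Summing over all $i$ and using the symmetry $q^{ij}=q^{ji}$ to reindex the double sum yields
\[
\SW(s') \ge \sum_{i \in N} q^i(s_i) + \sum_{j \in K} b_j ,
\]
a bound independent of the particular equilibrium $t$, so $s_K$ indeed guarantees this value. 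As $K$ collects the $k$ largest $b_j$ we have $\sum_{j \in K} b_j \ge \frac{k}{n}\sum_{j \in N} b_j$, and since $\sum_j b_j = \SW(s) - \sum_i q^i(s_i)$ together with $1-k/n \ge 0$, this rearranges to $\SW(s') \ge \frac{k}{n}\SW(s) + (1-\tfrac{k}{n})\sum_i q^i(s_i) \ge \frac{k}{n}\SW(s)$.

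For tightness I would exhibit a single instance in which \emph{no} coalition of size $k$ can guarantee more than $\frac{k}{n}\SW(s)$. Take the complete graph on $n$ players, no individual preferences, and a distinguished colour $0$; each edge pays $W := (n-k-1)/k$ when both endpoints play $0$, pays $1$ when both play the same non-zero colour, and pays $0$ otherwise (valid for $1 \le k \le n-2$; the case $k=n$ is trivial). Let $s$ assign colour $0$ to everyone, so $\SW(s) = Wn(n-1)$. By symmetry every size-$k$ coalition behaves identically, so it suffices to analyse one. Fixing $K$ at colour $0$, consider the profile in which all $n-k$ free players play a common non-zero colour $1$: such a player earns $n-k-1$ there and exactly $Wk = n-k-1$ by switching to $0$, so this is a (weak) Nash equilibrium of $\m G[s_K]$. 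Its social welfare computes to $(n-k-1)(n-1) = \frac{k}{n}\SW(s)$, capping the guarantee of $K$ from above; combined with the lower bound above, the guarantee equals $\frac{k}{n}\SW(s)$ exactly.

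The positive direction is essentially routine once the deviation bound is set up; the main obstacle is the tightness construction, specifically calibrating $W$ so that the luring effect of the fixed block is exactly neutralised — a free player must be left \emph{indifferent} between joining the fixed players on colour $0$ and remaining in the suboptimal coordinated group. Any larger reward on colour $0$ would let the fixed players recruit the free players and recover strictly more than a $k/n$ fraction, so hitting the bound on the nose is precisely what makes the example delicate; I would also need to check that no other equilibrium of $\m G[s_K]$ drops below this value, which follows because any smaller non-zero group of free players would strictly prefer to defect to colour $0$.
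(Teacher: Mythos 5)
Your proof is correct. The positive direction is in essence the paper's own argument: the paper also fixes a top-$k$ coalition (ranked by the full payoffs $p_j(s)$ rather than your edge totals $b_j$; either ranking works) and proves precisely your deviation inequality $p_i(s') \ge q^i(s_i) + \sum_{j \in N_i \cap K} q^{ij}(s)$, but it packages the subsequent summation as the statement that $\m G[s_K]$ is $(\frac{k}{n},0)$-smooth with respect to $s$ and then invokes Roughgarden's smoothness framework. That framing buys something your direct computation does not: the $\frac{k}{n}\SW(s)$ guarantee automatically extends to correlated and coarse correlated equilibria of $\m G[s_K]$, a point the paper highlights in its techniques section; conversely, your version is self-contained and even records the slightly stronger bound $\frac{k}{n}\SW(s)+(1-\frac{k}{n})\sum_{i} q^i(s_i)$. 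Where you genuinely depart from the paper is tightness: the paper stays inside graph coordination games, taking the unweighted complete graph on $n=2k+1$ nodes with colors $a,b$, fixing $K$ on $a$ and letting the $k+1$ free players sit on $b$ (each indifferent between $k$ neighbors on either side), which gives welfare $2k^2=\frac{k}{n}\SW(s^*)$; your calibrated-weight instance with $W=(n-k-1)/k$ achieves the same indifference for every pair $(n,k)$ with $k\le n-2$, so it is more general in the parameters, at the price of being a genuinely polymatrix instance rather than a graph coordination game (the edge payoff depends on which color is shared). Both constructions are valid, and your closing verification that no equilibrium of $\m G[s_K]$ falls below the bound is a nice extra, though not strictly needed for tightness.
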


Using the $2$-approximation algorithm for the social welfare optimization problem in \cite{cwi:opt_cong}, we obtain the following corollary.

\begin{corollary}
Let $\m G$ be a polymatrix coordination game \wip\ where the bimatrix game of every edge has positive entries on the diagonal only . Given a number $k$, we can compute a joint strategy $f_K$ of size $K$ that guarantees a $\frac{k}{2n}$ fraction of the optimal social welfare.
\end{corollary}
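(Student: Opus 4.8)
The plan is to prove the guarantee by a smoothness-style unilateral-deviation argument and to choose the coalition $K$ greedily. For a player $j$ write $W_j := \sum_{i \in N_j} q^{ij}(s)$ for the total payoff that the edges incident to $j$ contribute under $s$, and $A := \sum_{i \in N} q^i(s_i)$ for the total individual-preference payoff; note that $\SW(s) = A + \sum_{j \in N} W_j$ since each edge is counted once from each endpoint. All values $W_j$ can be computed in polynomial time, so I would let $K$ be the $k$ players with the largest $W_j$. By a standard averaging argument the top $k$ of $n$ values satisfy $\sum_{j \in K} W_j \ge \frac{k}{n} \sum_{j \in N} W_j$.

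Next I would lower bound the social welfare of an \emph{arbitrary} Nash equilibrium $s' = (s_K, s'_{-K})$ of $\m G[s_K]$. For $i \in N \setminus K$ the equilibrium condition applied to the deviation towards the fixed strategy $s_i$ gives $p_i(s') \ge p_i(s_i, s'_{-i})$; dropping the non-negative terms from neighbors outside $K$ and using $s'_j = s_j$ for $j \in K$ yields $p_i(s') \ge q^i(s_i) + \sum_{j \in N_i \cap K} q^{ij}(s)$. The same bound holds trivially for $i \in K$, since there $s'_i = s_i$ and all payoffs are non-negative. Summing over all players and regrouping the edge contributions by the endpoint lying in $K$ (so that $\sum_{i \in N}\sum_{j \in N_i \cap K} q^{ij}(s) = \sum_{j \in K} W_j$) gives $\SW(s') \ge A + \sum_{j \in K} W_j$. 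Because $A \ge 0$ and $k \le n$, combining with the selection inequality yields $\SW(s') \ge A + \frac{k}{n}\sum_{j \in N} W_j \ge \frac{k}{n}\big(A + \sum_{j \in N} W_j\big) = \frac{k}{n}\SW(s)$, as desired. Since the argument uses only a single fixed deviation per player, it is a genuine smoothness argument and the same bound carries over verbatim to coarse correlated equilibria of $\m G[s_K]$.

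For the matching lower bound (``this is tight'') I would exhibit a graph coordination game together with a joint strategy $s$ that is a social optimum, for which no coalition of size $k$ can guarantee more than $\frac{k}{n}\SW(s)$. The key feature the construction must provide is a competing, locally-coordinated option for every player, so that after fixing any $k$ players to $s_K$ there still remains a Nash equilibrium of $\m G[s_K]$ in which the unfixed players stay in the inferior coordination, each of them resisting the temptation to join $s_K$ because its current option is at least as profitable as the edges it would gain towards $K$. Making these incentives consistent for \emph{every} choice of $K$ simultaneously, while keeping the welfare of the bad equilibrium down to exactly a $k/n$ fraction, is the delicate part; here the weighted edges and multiple colors of graph coordination games are essential, as a single common color would let the unfixed neighbors cascade to $s_K$ and recover too much welfare. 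I expect this tightness construction, rather than the guarantee itself, to be the main obstacle.
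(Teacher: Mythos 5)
The core of your argument --- the greedy choice of $K$ and the unilateral-deviation (smoothness) bound showing $\SW(s') \ge \frac{k}{n}\SW(s)$ for every Nash equilibrium $s'$ of $\m G[s_K]$ --- is correct, but it proves the wrong statement: it is essentially (a variant of) Theorem~\ref{thm:forcing_sw}, which the paper establishes separately and which guarantees a $\frac{k}{n}$ fraction of $\SW(s)$ for a \emph{given} joint strategy $s$. The corollary asks for more: a polynomial-time computable $f_K$ that guarantees a $\frac{k}{2n}$ fraction of the \emph{optimal} social welfare $\SW(s^*)$. Your proof never says where $s$ comes from. If you take $s = s^*$, your argument gives the stronger factor $\frac{k}{n}$ but is not constructive, since a social optimum cannot be computed in polynomial time (the optimization problem is NP-hard); if $s$ is arbitrary, your conclusion says nothing about $\SW(s^*)$. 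This is exactly where the two ingredients of the statement that you never use must enter: the hypothesis that every bimatrix game has positive entries only on the diagonal is precisely what makes the $2$-approximation algorithm of \cite{cwi:opt_cong} for the social welfare optimization problem applicable, and the factor $2$ in $\frac{k}{2n}$ is the approximation loss of that algorithm. The paper's proof is a two-step composition: first compute $s$ with $\SW(s) \ge \frac{1}{2}\SW(s^*)$ using that algorithm, then apply Theorem~\ref{thm:forcing_sw} to this $s$, obtaining a coalition $K$ of size $k$ such that $s_K$ guarantees $\frac{k}{n}\SW(s) \ge \frac{k}{2n}\SW(s^*)$. This composition is entirely absent from your proposal; the fact that neither the diagonal hypothesis nor the factor $2$ appears anywhere in your argument is the symptom.

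A secondary point: the corollary makes no tightness claim (the clause ``and this is tight'' belongs to Theorem~\ref{thm:forcing_sw}, relative to $\SW(s)$, not to the optimum), so the tightness construction you sketch --- and which you yourself identify as the main obstacle and leave unfinished --- is not needed for this statement at all.
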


\bibliographystyle{abbrv}
\bibliography{monabib}

\newpage
\appendix

\section{Missing proofs of Section~\ref{sec:existence}}
\label{sec:app:1}

\begin{proof}[Theorem~\ref{thm:ex_ne}, Claim 1]
We show that $\m G$ admits an exact potential. We can decompose $\m G$ into a game $\m G_1$ in which player $i$'s payoff is $q^i$ and a polymatrix coordination game $\m G_2$ in which player $i$ receives payoff $\sum_{j \in N_i} q^{ij}$. Clearly, $\sum_{i \in N} q^i(s)$ is an exact potential for $\m G_1$. Further, it is known that half the social welfare is an exact potential for $\m G_2$; see \cite{cai:multiplayer_minmax}. Thus
$$
\Phi(s) := \sum_{i \in N} q^i(s) + \frac{1}{2} \sum_{i \in N} \sum_{j \in N_i} q^{ij}(s)
= \sum_{i \in N} q^i(s) + \sum_{\set{i, j} \in E} q^{ij}(s)
$$
is an exact potential for $\m G$. 
\qed
\end{proof}

\begin{proof}[Theorem~\ref{thm:no-FIP}]
Denote by $\oplus$ (resp. $\ominus$) the addition (resp. subtraction) modulo $n-1$.
Consider the following polymatrix coordination game on $n$ players. For convenience we assume here that the set of players is given as $N = \set{1, \dots, n-1}$. The strategy set of player $i$ is $S_i = N$. One way to think of this is that every player may choose another player to support (including himself). For two players $i \neq j$, we define the bimatrix game $q^{ij}$ is as follows. If $i$ and $j$ both support $j$, they get $2^{i \ominus j}$. So both get $2^{n-2}$ if `$i$ is $j$'s left neighbor', $2^{n-3}$ if `$i$ is $j$'s second left neighbor', and so on. Similarly, if $i$ and $j$ both choose $i$, they get $2^{j \ominus i}$. In all other cases, $q^{ij} (s_i, s_j) = 0.$ Figure \ref{fig:payoffs} depicts the bimatrix payoffs if all players support player $n-1.$

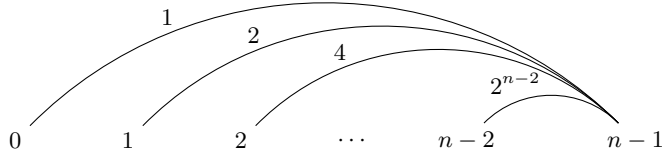
\begin{figure}[t]
\begin{center}
\begin{tikzpicture}[scale=1.5]
\node (0) at (0, 0) {$0$};
\node (1) at (1, 0) {$1$};
\node (2) at (2, 0) {$2$};
\node () at (3, 0) {$\cdots$};
\node (n-2) at (4,0) {$n-2$};
\node (n) at (5.5, 0) {$n-1$};
\path (0) edge [bend left=45] node[above, pos=.25, black] {$1$} (n); 
\path  (1) edge [bend left=45] node[above, pos=.25, black] {$2$} (n);
\path  (2) edge [bend left=45] node[above, pos=.25, black] {$4$} (n);
\path  (n-2) edge [bend left=45] node[above, pos=.25, black] {$2^{n-2}$} (n);
\end{tikzpicture}
\end{center}
\caption{The edges are labeled with their respective bimatrix payoffs when every player supports player $n-1.$
\label{fig:payoffs}}
\end{figure}

For $i = 0, \ldots, n-1$, let $s^i$ be the strategy profile in which player $i\oplus 1$ supports himself and all other players support player $i$. Let $K_i = N \sm \{i\}.$ We claim that
\[
 s^{n-1} \: \betredge{K_{n-1}} s^{n-2} \: \betredge{K_{n-2}} \:  \:\ldots \:\betredge{K_{2}} \: s^{1} \: \betredge{K_{1}} \:  s^0 = s^{n-1}
\]
is a cycle of $\alpha$-improving deviations, where $\alpha = 2 - \frac{1}{2^{n-3}}$.

By symmetry, it is enough to show that $s^{n-1} \betredge{K_{n-1}} s^{n-2}$ is $\alpha$-improving. Note that indeed, $s^{n-1}$ and $s^{n-2}$ only differ in the strategies of $K_{n-1} = N \sm \{n-1\}.$ The payoffs in $s^{n-1}$ are as follows. Player $0$ gets 0 because he is the only one supporting himself. Players $0 < i < n-1$ get $q^{i,n-1}(n-1,n-1) = 2^{i}$ from supporting $n-1$. Finally, player $n-1$ gets $\sum_{i=1}^{n-2} 2^i = 2^{n-1}-2$. 

In $s_{n-1}$, every player $i \neq n-1$ gets the utility that $i+1$ previously got. For all players in $K_{n-1}$ except $n-2$, this is at least twice her previous utility. Player $n-2$ improves her utility by $(2^{n-1}-2) / 2^{n-2} = \alpha.$ So $s^{n-1} \betredge{K_{n-1}} s^{n-2}$ is indeed an $\alpha$-improving joint deviation.

Letting $n$ go to infinity, this proves the theorem for arbitrary $\alpha<2$.
\qed
\end{proof}

\begin{proof}[Theorem~\ref{thm:graph_without_2eq}]
The first claim follows from Theorem~\ref{thm:SE_forest}.

For the second claim consider the graph coordination game depicted in Figure \ref{fig:alpha_approx}. We argue that there is no $(\alpha, 2)$-equilibrium for every $\alpha < \varphi$. Suppose there is one and call it $s.$ Then by definition of the color sets, only one of the edges in the triangle can be unicolor. By symmetry, we assume it is $\{v_0, v_1\}.$ We can assume $s_{v_2}=z$ (if $s_{v_2}=y$, $v_1$ can profitably deviate). Then the joint deviation of $K = \{v_1, v_2\}$ to $y$ yields a payoff increase of $\varphi > \alpha$ (for $v_2$) and $\frac{1+\varphi}{\varphi} = \varphi > \alpha$ (for $v_1$). So $s$ is not an $\alpha$-approximate 2-equilibrium.
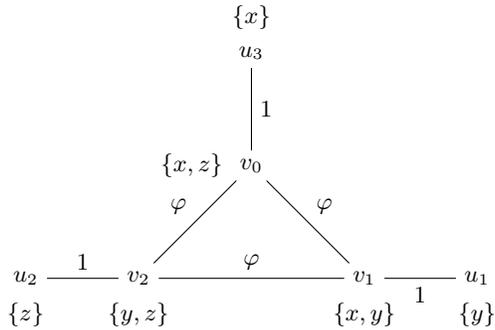
\begin{figure}[t]
\begin{center}
\begin{tikzpicture}[scale=1.5]
    \node [label=left:{\lf $\{x,z\} $}] (v_0) at (0, 1) {$v_0$};
    \node [label=below:{\lf $\{y,z\}$}] (v_2) at (-1, 0) {$v_2$};
    \node [label=below:{\lf $\{x,y\}$}] (v_1) at (1, 0) {$v_1$};
    \node [label=above:{\lf $\{x\}$}] (u_0) at (0, 2) {$u_3$}; 
    \node [label=below:{\lf $\{y\}$}] (u_1) at (2, 0) {$u_1$}; 
    \node [label=below:{\lf $\{z\}$}] (u_2) at (-2, 0) {$u_2$}; 
	
    \draw (v_2) -- node[auto]{$\varphi$} (v_0);                                                       
    \draw (v_0) -- node[auto]{$\varphi$} (v_1);                                                       
    \draw (v_2) -- node[auto]{$\varphi$} (v_1);                                                       
    \draw (u_0) -- node[auto]{$1$} (v_0);                                                       
    \draw (u_1) -- node[auto]{$1$} (v_1);
    \draw (u_2) -- node[auto]{$1$} (v_2);                                                       
\end{tikzpicture}
\caption{A graph coordination game with no $\alpha$-approximate 2-equilibrium for every $\alpha < \varphi$.
\label{fig:alpha_approx}}
\end{center}
\end{figure}
\qed
\end{proof}

\section{Missing proofs of Section~\ref{sec:inefficiency}}
\label{sec:app:2}

\begin{proof}[Theorem~\ref{thm:approx_poa}, lower bounds]
We first prove that the $\alpha$-approximate strong price of anarchy is at least $2\alpha$ for the class of graph coordination games.

Consider the graph coordination game depicted in Figure~\ref{fig:alpha_approx_spoa}. 
\begin{figure}[h]
\begin{center}
\begin{tikzpicture}[scale=1.5]
	\node [label=above:{\lf $\{a\}$}] (v_1) at (1, 0) {$v_1$};
	\node [label=above:{\lf $\{a,b\}$}] (v_2) at (2, 0) {$v_2$};
	\node [label=above:{\lf $\{c,b\}$}] (v_3) at (3, 0) {$v_3$};
	\node [label=above:{\lf $\{c\}$}] (v_4) at (4, 0) {$v_4$};
	\draw (v_1) -- node[auto]{$\alpha$} (v_2);                                                       
	\draw (v_2) -- node[auto]{$1$} (v_3);                                                       
	\draw (v_3) -- node[auto]{$\alpha$} (v_4);                                                       
\end{tikzpicture}
\end{center}
\caption{\label{fig:alpha_approx_spoa}}
\end{figure}

The strategy profile in which $v_2$ and $v_3$ choose $b$ is an $\alpha$-approximate strong equilibrium with social welfare $2$, while the optimum (in which they choose $a$ and $c$, resp.) achieves $4\alpha.$

Next, we show that for graph coordination games and all $n \geq k > 1$ the $\alpha$-approximate $k$-price of anarchy is at least 
\[
2\alpha \Big(\frac{n-1}{k-1}-1 \Big) + 1
\]
The $\alpha$-approximate price of anarchy is $\infty$ \cite{us:coord}.

Fix $n$ and $1\leq k \leq n$. Let $N$ consist of two sets $V_1$ and $V_2$ of size $k$ and $n-k$, respectively, and define
\[
E = \{\{u,v\} \mid u \in V_1, v \in V_1 \cup V_2\}.
\]
We give the edges in $E[V_1]$ weight 1 and the edges between $V_1$ and $V_2$ weight $\alpha.$

Fix three colors $a,b$ and $c$. For $v \in V_1$, let $S_v = \{a,c\}$. For $v \in V_2$, let $S_v = \{b,c\}$. Then the color assignment $\sigma$ in which each player chooses the common color $c$ is a social optimum.
The social welfare is
\[
\SW(\sigma) = k(k-1) + 2 \alpha k(n-k) = k \big(k-1 + 2 \alpha ((n-1) - (k-1))\big).
\]

Next we show that the color assignment $s$ in which every node in $V_1$ chooses $a$ and every node in $V_2$ chooses $b$ is a $k$-equilibrium. Assume that there is a profitable deviation $s \betredge{K} s'$ such that $|K| \leq k$. Then all nodes in $K$ switch to $c$ and all nodes that choose $c$ in $s'$ are in $K$. So for all $v \in K$, $p_v(s') = |N_v \cap K|.$ So there is $v \in V_1 \cap K$ because otherwise the payoff of all nodes in $K$ would remain 0. But then $p_v(s') = |N_v \cap K| \leq \alpha(k-1) = \alpha p_v(s)$, which yields a contradiction. Note that $\SW(s) = k(k-1).$

It follows that the $\alpha$-approximate $k$-price of anarchy is unbounded if $k=1$ and at least
\[
\frac{\SW(\sigma)}{\SW(s)} = \frac{2 \alpha ((n-1) - (k-1)) + k-1}{k-1} = 
2 \alpha \left(\frac{n-1}{k-1} -1 \right) + 1
\]
for $k > 1.$
\qed
\end{proof}

\section{Missing proofs of Section \ref{sec:complexity}}

A crucial insight to prove the first part of the theorem is that it is sufficient to consider deviations that are \emph{simple}: We call a deviation $s \betredge{K} s'$ \emph{simple} if the subgraph $G[K]$ induced by $K$ is connected and all nodes in $K$ deviate to the same color, i.e., $s' = (x_K, s_{-K})$ for some $x \in M$.

\begin{lemma}
\label{lem:simple_ext}
Let $s \betredge{K} s'$ be an $(\alpha, k)$-improving deviation from $s$. Then there is also a simple such deviation. 
\end{lemma}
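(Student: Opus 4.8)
The plan is to prove this by transforming an arbitrary $(\alpha,k)$-improving deviation into a simple one in two stages: first reduce to the case where all deviating nodes switch to the same color, and then reduce to the case where the induced subgraph on the coalition is connected. Throughout I would exploit the special structure of graph coordination games, where each player's payoff under a deviation is purely the weighted degree into same-colored neighbors plus the $q^i$ term, so that payoffs decompose cleanly along colors and connected components.

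For the first reduction, suppose $s \betredge{K} s'$ is $(\alpha,k)$-improving, and partition $K$ according to the color each node picks in $s'$, say $K = K_{x_1} \cup \dots \cup K_{x_m}$ where $K_x := \sset{i \in K}{s'_i = x}$. The key observation is that in a graph coordination game a node $i \in K_x$ only gains payoff from neighbors that also play color $x$, i.e. from neighbors in $K_x$ together with the fixed neighbors already playing $x$ in $s_{-K}$. Crucially, the presence of the other parts $K_{x'}$ (for $x' \neq x$) is irrelevant to the payoff of $i$: whether or not those nodes deviate, player $i$ receives exactly $q^i(x) + w(\sset{\set{i,j}\in E}{s'_j = x \text{ and } j \in K_x} \text{ or } s_j = x,\, j \notin K)$. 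Hence for each color $x$ the restricted deviation $s \betredge{K_x} (x_{K_x}, s_{-K_x})$ is itself $\alpha$-improving for every member of $K_x$, because each such member's payoff is unchanged from what it was under $s'$, and still exceeds $\alpha p_i(s)$. So I would pick any nonempty part $K_x$; this gives an $\alpha$-improving deviation to a single common color $x$, and since $|K_x| \le |K| \le k$ it is still a $k$-deviation.

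For the second reduction, I now have a deviation $s \betredge{K} (x_K, s_{-K})$ where all of $K$ plays the single color $x$, but $G[K]$ need not be connected. Consider the connected components $C_1, \dots, C_t$ of $G[K]$. Again by the local structure of graph coordination games, a node $i$ in component $C_r$ receives payoff only from $q^i(x)$ and from edges to same-colored neighbors, and any such neighbor is either in $C_r$ (within the same component, since $K$-neighbors of $i$ lie in $i$'s component) or is a fixed node outside $K$ already playing $x$. Nodes in the other components $C_{r'}$ are not neighbors of $i$ (otherwise they would be in the same component), so deviating them or not does not affect $p_i$. Therefore the restricted deviation $s \betredge{C_r} (x_{C_r}, s_{-C_r})$ leaves each $i \in C_r$ with the same payoff as under $(x_K, s_{-K})$, hence is again $\alpha$-improving for each member. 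Picking any one component $C_r$ yields a deviation whose coalition induces a connected subgraph and whose members all play the common color $x$, and with $|C_r| \le |K| \le k$. This is exactly a simple $(\alpha,k)$-improving deviation, completing the proof.

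The main obstacle I anticipate is being careful about the bookkeeping of which neighbors contribute to a node's payoff — specifically making sure that restricting the coalition never \emph{increases} any deviator's payoff obligation nor accidentally removes a beneficial same-colored neighbor that lies in $K$. The clean resolution is the observation that in graph coordination games all payoff comes from same-colored edges, so when we fix all deviators to a single color $x$, a deviator's same-colored neighbors are automatically partitioned into those inside its own component (which we keep) and fixed outside nodes (which are untouched); nodes in other components or in other color classes simply never contributed to its payoff in the first place. Once that invariance is nailed down, both reductions are immediate, and neither can inflate the coalition size. I would present the two reductions as a single short argument rather than two separate lemmas.
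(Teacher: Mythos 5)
Your proof is correct and takes essentially the same approach as the paper's, which performs your two reductions in a single step: pick a node $v$ with color $x = s'_v$ and restrict to the same-colored nodes reachable from $v$ in $G[K]$, then observe that no deviator's set of same-colored neighbors shrinks. One small correction: in your first reduction the restricted payoff is not ``exactly unchanged'' but can strictly increase (a node in $K \setminus K_x$ reverts to $s_j$, which may equal $x$ and be adjacent to a member of $K_x$); since this only helps, replacing ``unchanged'' by ``at least as large'' makes the argument exact.
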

\begin{proof}
Let $s \betredge{K} s'$ be an $\alpha$-improving deviation with $|K| \leq k$. Pick an arbitrary $v \in K$ and let $x = s'_v$. Let $L$ consist of those nodes $u \in K$ for which $s'_u = x$ and $u$ is reachable in $G[K]$ from $v.$ Let $s'' = (x_L, s_{-L})$.  For all nodes $u\in L$, 
\begin{align*}
p_u(s'') = w(\{u,v\} \in E \mid s''_v = x\}) \geq w(\{u,v\} \in E \mid s'_v = x\}) = p_u(s') > \alpha p_u(s),
\end{align*}
where for the first inequality we use the fact that all neighbors of $u$ playing $x$ in $s'$ also do so in $s''$ by choice of $L.$
\qed
\end{proof} 

\subsection*{Proof of Theorem \ref{thm:verification}}

\begin{proof}[Theorem \ref{thm:verification}, $k=O(1)$]
By Lemma \ref{lem:simple_ext}, it is enough to check for all colors $x$ and all coalitions $K \subseteq \{v \in N \mid x \in S_v\}$ of size at most $k$ whether the deviation to $x$ is $\alpha$-improving. For each coalition of size at most $k$, this takes time $O(mk)$ and we check $O(n^k)$ such coalitions. So we need $O(mn^{k} k)$ time in total.
\qed
\end{proof}

\begin{proof}[Theorem \ref{thm:verification}, $k = n$]
We show that the problem can be reduced to the following \textsc{MinDegree} problem: Given a graph $G = (V,E)$ together with non-negative numbers $d_v$ for all nodes $v \in V$, find the inclusionwise maximal $K \su V$ such that for every $v \in K$, $w(\{\{v,j\} \in E \mid j \in K\}) > d_v$. It is not hard to see that a simple greedy algorithm solves the \textsc{MinDegree} problem in polynomial time (see Theorem~\ref{thm:min_deg} below for details).

The idea now is to use this algorithm to find an inclusionwise maximal coalition $K$ such that  $s \betredge{K} (x_K, s_{-K})$ is $\alpha$-improving. To this aim, consider the graph $G' = (N',E')$ induced by the set of vertices that can choose $x$ but do not do so in $s$. Let
$$
d_v = \alpha  p_v(s) - w(\{\{v,j\} \in E \mid s_j = x\}) - q^v(x) 
$$ 
for $v \in N'$. 
Fix a coalition $K \su N'$. Then for each $v \in K$, jointly deviating to $(x_{K}, s_{-K})$ is profitable iff the weight of its incident edges in the induced subgraph $(K, \{\{v,w\} \in E \mid v,w \in K\})$  is more than $d_v$. Indeed, 
\begin{align*}
p_v(x_{K}, s_{-K}) &= q^v(x)+  w(\{\{v,j\} \in E \mid j \in K\})\\
	& \qquad + w(\{\{v,j\} \in E \mid  s_j = x\})\\
 	&> \alpha p_v(s)
\end{align*}
if and only if $ w(\{\{v,j\} \in E \mid j \in K\}) > d_v$. So the problem reduces to finding the inclusionwise maximal solution of \textsc{MinDegree}.

Now the claim immediately follows: Let a joint strategy $s$ be given. For all colors $x \in M$, we compute the inclusionwise maximal $K_x$ such that $s \betredge{K_x} (x_K, s_{-K})$ is $\alpha$-improving. If there is a $K_x$ such that $K_x \neq \emptyset$, $s$ is not an $\alpha$-approximate strong equilibrium. Otherwise, there is no simple $\alpha$-improving deviation. So $s$ is an $\alpha$-approximate strong equilibrium by Lemma \ref{lem:simple_ext}.
\qed
\end{proof}

\begin{theorem}
\label{thm:min_deg}
\textsc{MinDegree} can be solved efficiently. 
\end{theorem}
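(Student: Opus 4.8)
The plan is to show that a greedy \emph{peeling} algorithm solves \textsc{MinDegree}. First I would observe a key monotonicity property: if a node $v$ violates the degree requirement in the full graph $G$ (i.e.\ $w(\{\{v,j\}\in E \mid j \in V\}) \le d_v$), then $v$ cannot belong to \emph{any} feasible $K$. This is because removing nodes can only decrease the weight of $v$'s incident edges inside a candidate set; since the requirement already fails when \emph{all} other nodes are present, it fails in every subset $K \ni v$. Hence any such $v$ is safe to delete, and deleting it cannot remove a node that belongs to the (unique) inclusionwise maximal solution.

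Building on this, the algorithm is: repeatedly find a node $v$ with $w(\{\{v,j\}\in E \mid j \in V_{\mathrm{cur}}\}) \le d_v$ in the current vertex set $V_{\mathrm{cur}}$ and delete it, until no such node remains. Output the remaining set $K$. When the process halts, every surviving node satisfies its degree requirement with respect to $K$ by construction, so $K$ is feasible. Each deletion runs in polynomial time (recompute or update weighted degrees), and there are at most $|V|$ deletions, giving a polynomial-time bound.

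The main step is proving that the output is the \emph{inclusionwise maximal} feasible set, and in fact that a unique such set exists. I would establish that feasible sets are closed under union: if $K_1$ and $K_2$ are both feasible, then for any $v \in K_1 \cup K_2$, say $v \in K_1$, we have $w(\{\{v,j\}\in E \mid j \in K_1\}) > d_v$, and since $K_1 \subseteq K_1 \cup K_2$ the weight only increases, so the requirement still holds in $K_1 \cup K_2$. Thus the union of all feasible sets is itself feasible and is the unique inclusionwise maximal solution $K^*$. To finish, I would argue by induction on the deletions that the algorithm never removes a node of $K^*$: by the monotonicity observation above, any node deleted in the first round fails its requirement even in $V \supseteq K^*$ and so lies outside $K^*$; inductively, as long as the current set contains $K^*$, any deletable node $v$ fails its requirement in the current set, hence also in $K^*$ (which is a subset), so $v \notin K^*$. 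Therefore $K^* \subseteq K$, and combined with feasibility of $K$ and maximality of $K^*$ we conclude $K = K^*$.

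The main obstacle I anticipate is being careful with the direction of the monotonicity argument, since the requirement involves a \emph{strict} inequality $w(\cdot) > d_v$ while deletability uses the non-strict $w(\cdot) \le d_v$; one must check that these are genuinely complementary so that ``not deletable'' coincides with ``feasible,'' and that the inductive invariant $K^* \subseteq V_{\mathrm{cur}}$ is maintained at every step. Once the union-closure and monotonicity facts are pinned down, correctness follows cleanly.
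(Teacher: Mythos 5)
Your proposal is correct and follows essentially the same route as the paper's proof: a greedy peeling algorithm, union-closure of feasible sets to establish a unique inclusionwise maximal solution, and the inductive invariant that the maximal solution is never peeled away (the paper states this contrapositively, via the first removed node of a feasible set, but the monotonicity argument is identical). No gaps.
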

\begin{proof}
Note that the empty set satisfies the degree conditions, so a solution always exists. Furthermore, for $K, K' \su N$ that satisfy the degree conditions, $K \cup K'$ also satisfies them. So there is a unique inclusionwise maximal $K$.

Consider the following algorithm: 
\begin{enumerate}
\item Initialize $K = N$. 
\item For each $v \in K$, check if $w(\{\{v,j\} \mid j \in K\}) > d_v$ and, if not, remove $v$ from $K$.
\item If a node has been removed in the last iteration of Step 2, repeat Step 2. Otherwise stop and output $K.$
\end{enumerate}
In every iteration of Step 2 except the last one a node is removed. So this algorithm runs in time $O(n(n+m)).$

We now show correctness. Let $K$ be as in the final state of the algorithm. Then $w(\{\{v,j\} \mid j \in K\}) > d_v$ for all $v \in K$ because otherwise another node would have been removed. 

Now we show inclusionwise maximality. Suppose some $K' \su N$ satisfies $w(\{\{v,j\} \mid j \in K\}) > d_v.$ for all $v \in K'.$ We claim that at each step of the algorithm, $K' \su K$: $K' \su K$ is true at the beginning of the algorithm. Consider an iteration of Step 2 and suppose $K' \su K$ holds at the beginning. Suppose some $v \in K'$ is removed in this iteration, and pick $v$ to be the first node of $K'$ that is removed. Then at this state, $\dg_{G[K]}(v) \geq \dg_{G[K']}(v) > d_v$ and thus $v$ is not removed, a contradiction. This shows the claim.
\qed
\end{proof}

\begin{proof}[Theorem \ref{thm:verification}, $\alpha$ fixed] This proof uses the same idea as \cite[Theorem 9]{us:coord}.
It is easy to verify that the problem is in co-NP: a certificate of a NO-instance is an $\alpha$-improving deviation of a coalition of size at most $k$.

We show the hardness by reduction of the complement of \textsc{Clique}, which is a co-NP-complete problem. Let $(G,k)$ be an instance thereof, with $G = (V,E)$. For $v \in V$ let $S_v = \{x_v, y\}$, where the colors $x_v$ are pairwise distinct. Furthermore, for every node $v \in V$ we add a node $u_v$ and edges $\{v,u_v\}$ of weight $k-2$.  This additional node can only choose the color $x_v.$ We give the edges in $E$ weight $\alpha$. Let $s$ be the joint strategy in which every node $v \in V$ chooses $x_v$. We claim that this is an $\alpha$-approximate $k$-equilibrium if and only if $G$ has no clique of size $k.$

Suppose $G$ has a clique $K$ of size $k$. Then jointly deviating to $y$ yields to each node in $K$ a payoff of $\alpha(k-1)$, whereas every node has a payoff of $k-2$ in $s$. So this is an $\alpha$-improving deviation. For the other direction, suppose that there is a profitable deviation $s \betredge{K} s'$ by a coalition $K$ of size at most $k$. Then every node in $K$ deviates to $y$ and hence belongs to $V.$ Since every node in $K$ has a payoff of $k-2$ in $s$, $p_v(s') > \alpha(k-2)$ for all $v \in K.$ So $v$ is connected to at least $k-1$ nodes in $K$. Since $K$ is of size at most $k$, this implies that $K$ is a clique of size $k.$
\qed
\end{proof}

\subsection*{Proof of Theorem \ref{thm:decide-existence}}

\begin{proof}[Theorem \ref{thm:decide-existence}, $k \ge 2$]
The problems are in NP because the corresponding verification problems are in P, as shown in Theorem \ref{thm:verification}.

Consider an instance $(G,l)$ of \textsc{Minimum Maximal Matching}, and let $\m G$ be the weighted coordination game constructed in Theorem \ref{thm:computation_2eq}. We claim that every 2-equilibrium is also a strong equilibrium. Then $G$ has a maximal matching of size at most $k$ iff $\m G$ has a strong equilibrium iff $\m G$ has a $k$-equilibrium, showing NP-hardness of both problems.

Let $s$ be a 2-equilibrium of $\m G.$ Assume that there is a profitable deviation $s \betredge{K} s'$. By Lemma \ref{lem:simple_ext}, we can assume that $K$ is connected and all nodes in $K$ can choose a common color which is different from the ones they choose under $s$. But this means that $|K| = 2$ by construction of $\m G.$ So $s$ is not a 2-equilibrium, a contradiction.
\qed
\end{proof}

\subsection*{Proof of Theorem \ref{thm:SE_forest}}

\begin{proof}[Theorem \ref{thm:SE_forest}]
Assume w.l.o.g. that $G$ is connected, i.e., is a tree. 
We can furthermore assume that $\m G$ is without individual preferences by adding dummy nodes to `simulate' the players' individual preferences: We add a node $v_{x,i}$ for each color $x$ and original player $i$, which we connect to $i$ by an edge of weight $q^i(x)$, and thus get a new graph coordination game $\m G'$ without individual preferences. This preserves strong equilibria and the fact that $G$ is a tree.

Fix an ordering of $G$ with root $r$. For each node $v \in V$ let $\m C_v \su V$ denote the set of children of $v$ and let $\m P_v \in V$ denote the parent of $v$ (if $v \neq r$). Let $\m T_v$ denote the subtree of $G$ rooted at $v$.

Consider the sequential-move game $\m G$ if the players choose their strategies according to the ordering of $G$, starting at the root. Then this game has a subgame perfect equilibrium $s$ which can be computed in polynomial time by backwards induction. For a player $v$, denote by $s[v \to x]$ the joint strategy of the subtree $\m T_v$ implemented under $s$ if $v$ plays $x \in S_v.$ For $u \in \m T_v$, let $s_u[v \to x]$ denote the color $u$ chooses under $s[v \to x]$. For $y \in S_{\m P_v}$ and $\sigma \in S_{\m T_v}$, let $p_v(y, \sigma)$ be the payoff of $v$ if $\m P_v$ plays $y$ and nodes in $\m T_v$ play according to $\sigma$. Let $\bar s$ denote  the joint strategy that is implemented with respect to $s.$

Note that the condition of $s$ being a subgame perfect equilibrium can then be written as follows: For all players $v$ and colors $x \in S_{\m P_v}$, choosing $y = s_v[\m P_v \to x]$ maximizes $p_v(x, s[v \to y])$ among $y\in S_v.$

Assume for a contradiction that there is a profitable deviation $\bar s \betredge{K} s'$ in the original game $G$. By Lemma~\ref{lem:simple_ext}, we can assume this deviation to be simple, i.e., $K$ is connected and all players in $K$ deviate to the same color $x.$ Let $v$ be the root of $K$. \\

\noindent \emph{Claim 1.} For all $u \in \m T_v$, $s'_u = x$ implies $s_u[\m P_u \to x] = x$, i.e., if $u$ chooses $x$ in $s'$, then he in particular chooses it in $s$ if his parent does so.

\begin{proof}
For $u \notin K$, this is easy to see: if $\bar s_{\m P_u} = x$, the claim trivially holds because $s_u [\m P_u \to x] = \bar s_u = s'_u = x.$ If, on the other hand, $\bar s_{\m P_u} \neq x$, the reasoning goes as follows. The fact that $\bar s_u = s'_u = x$ implies that it is a best response for $u$ to play $x$ if $\m P_u$ plays $\bar s_{\m P_u}$. If $\m P_u$ plays $x$ instead of $\bar s_{\m P_u}$, then $x$ is the only strategy of $u$ that yields a better payoff than before in $s$. So $x$ is the only best response to $\m P_u$ playing $x.$

Now, we show the claim for $u \in K$. We can assume that the claim holds for all children of $u.$ Assume for a contradiction that $s_u[\m P_u \to x] = y \neq x.$

Because $s$ is a subgame perfect equilibrium,
\[
p_u(\bar s) = p_u(\bar s_{\m P_u}, s[u \to \bar s_u]) 
\geq p_u(\bar s_{\m P_u}, s[u \to y]).
\]
Since $x \neq y$,
\[
p_u(\bar s_{\m P_u}, s[u \to y]) \geq p_u(x,  s[u \to y])
\]
Next, since $y$ is a best response to $x$,
\begin{align*}
p_u(x, s[u \to y]) \geq p_u(x, s[u \to x])
\end{align*}

Since the claim holds for all children of $u$, all children of $u$ choosing $x$ in $s'$ also do so in $s[u \to x]$ and thus contribute to the utility of $u.$ So
\begin{align*}
p_u(x, s[u \to x]) \geq p_u(x, s'_{\m T_u}).
\end{align*}
Because $s'_u = x$, we have
\[
p_u(x, s'_{\m T_u}) \geq p_u(s').
\]

Putting together the sequence of inequalities, we get $p_u(\bar s) \geq p_u(s')$, a contradiction to $u \in K.$
\qed
\end{proof}

%

\noindent \emph{Claim 2.} For the root $v$ of $K$ it holds that $p_v(s') \leq p_v(\bar s).$

\begin{proof}
Because $s'_v = x$, it follows from Claim 1 that all children of $v$ choosing $x$ under $s$ also do so in $s[v \to x].$ Furthermore, the strategy of $\m P_v$ is the same in $\bar s$ and $s'$ because $v$ is the root of $K$. Thus 
\[
p_v(s') = p_v(\bar s_{\m P_v}, s'_{\m T_v}) \leq p_v(\bar s_{\m P_v}, s[v \to x]).
\]
But 
\[
p_v(\bar s_{\m P_v}, s[v \to x]) \leq p_v(\bar s_{\m P_v}, s[v \to \bar s_v])= p_v(\bar s)
\]
because $s$ is a subgame perfect equilibrium. Together this implies $p_v(s') \leq p_v(\bar s)$. 
\qed
\end{proof}
This is a contradiction: $v$ should have profited from the deviation to $s'$. Thus $\bar s$ is indeed a strong equilibrium.
\qed
\end{proof}

The following example shows that in polymatrix coordination games this idea does not work: The joint strategy implemented in a subgame perfect equilibrium is not even necessarily a Nash equilibrium.
\begin{example}
\label{ex:polymatrix_tree}
We consider a polymatrix coordination game with individual preferences. The result then extends to polymatrix coordination games without individual preferences by introducing dummy nodes.

Consider the polymatrix coordination game w.i.p. on two nodes $u$ and $v$, both of which can choose two strategies `coordinate' ($c_u$ and $c_v$) and `play selfishly' ($s_u$ and $s_v$). The payoffs $q^{uv}$ are as follows: If both players coordinate, they get $4$; if $v$ plays selfishly, they get $0$ no matter what $u$ does; if $u$ plays selfishly and $v$ coordinates, they get $2$. Furthermore, both players have a preference for playing selfishly, namely $q^{u}(s_u) = 3$, $q^{u}(c_u) = 0$ and similarly $q^v(s_v) = 3, q^v(c_v)=0$.

Now, consider the sequential move game in which $u$ moves first. If $u$ coordinates, it is a best response for $v$ to coordinate, resulting in a payoff of $4$ for both. If $u$ plays selfishly, $v$ responds by playing selfishly too, resulting in a payoff of $3$ for both. So $u$ plays $c_u$ in the unique subgame perfect equilibrium, and the implemented strategies are $(c_u, c_v).$ But this is not a Nash equilibrium because $u$ can profitably defect by playing $s_u$ and getting a utility of $q^u(s_u) + q^{uv}(s_u, c_v) = 5.$
\qed
\end{example}

\section{Missing proofs of Section~\ref{sec:coord-mech}}
\label{sec:app:coord}

In our games the common payoff $q^{ij}$ of the bimatrix game on edge $\set{i, j}\in E$ is distributed equally to both $i$ and $j$. An idea that arises is to use different \emph{payoff sharing rules} to reduce the inefficiency. More specifically, suppose we fix for each edge $\set{i, j}$ shares $q_i^{ij}$ and $q_j^{ij}$ (summing up to $2q^{ij}$) that players $i$ and $j$ obtain, respectively, when playing the bimatrix game on $\set{i, j}$.
The following example shows that asymmetric payoff sharing rules are too weak to reduce the inefficiency, even in graph coordination games. 
\begin{example}\label{ex:payoff-sharing}
We review the example given in the Introduction in the context of payoff sharing rules. Recall that in this game we are given an arbitrary graph with unit edge weights and every player $i \in N$ can choose between a private color $c_i$ and a common color $c$. Now, no matter how the weight of 1 of each edge is shared among its endpoints, the joint strategy $s = (c_i)_{i \in N}$ is a Nash equilibrium. Thus the price of anarchy remains unbounded. 
\qed
\end{example}

\begin{proof}[Theorem~\ref{thm:strat-imp}]
For the first claim: 
We show this by reduction from \textsc{Minimum Vertex Cover.} Let an instance of this problem be given, i.e., a graph $G=(V,E)$ and a number $k.$ We construct a new graph $G'$ as follows. We start with $G$ and divide each edge $e = \{u,v\} \in E$ into two edges $\{u, v_e\}$ and $\{v_e, v\}$ (where $v_e$ is a new node corresponding to $e$). The set of colors is $M = \{c_v \mid v \in V\} \cup \{p_v \mid v \in V\} \cup \{p\}$. For every node $v \in V$, set $S_v = \{p_v, c_v\}$, and for each edge $e = \{u,v\} \in E$ set $S_{v_e} = \{c_u, c_v, p\}.$ Intuitively, the $c$-colors are common colors that favor coordination and the $p$-colors are private colors. Call the constructed coordination game $\m G.$

\begin{figure}[t]
\begin{center}
  \begin{tikzpicture}[scale = 1.5]
    \node (u) at (0, 0) {$u$};
    \node (v) at (2, 0) {$v$};
	\node (e) at (1, -.2) {};
	\node (e') at (1, -1.3) {};
    
    \node [label=below:{\lf $\{c_u, p_u\}$}] (u') at (0, -1.5) {$u$};
    \node [label=below:{\lf $\{c_u, c_v, p\}$}] (v_e) at (1, -1.5) {$v_e$};
    \node [label=below:{\lf $\{c_u, p_v\}$}] (v') at (2, -1.5) {$v$};

    \draw (u) -- node[auto]{$e$} (v);                                                       
    \draw (u') -- (v_e) -- (v');
    
    \draw [dashed, ->] (e) -- (e');
\end{tikzpicture}
\caption{Dividing an edge.}\label{fig:weighted-non-existence}
\end{center}
\end{figure}
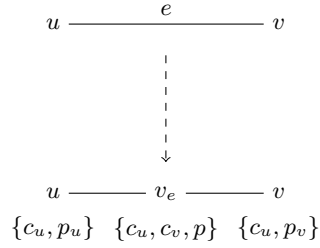
We claim that $G$ has a vertex cover of size $k$ if and only if there is a joint strategy $s_K$ of size $k$ that forces $2|E|$. Note that $2|E|$ is the maximum payoff in $\m G$ and is achieved iff all nodes $v_e$ have a payoff of 1.

Let a vertex cover $C$ of $G$ be given. We fix all nodes $v \in C$ to play $f_v := c_v.$ Let $s$ be a Nash equilibrium in $\m G[f_C].$ Then all nodes $v_e$ corresponding to edges in $G$ can achieve a payoff of $1$. So $\SW(s) \geq 2|E|.$

Conversely, suppose $f_K$ of size $k$ forces $2|E|$. We show that the set
\[
C := \{v \in V \mid \exists u \in K: \, f_K = c_u\}
\] 
with $|C| \leq k$ is a vertex cover of $G.$ Suppose that this is not the case. Then there exists an edge $e = \{u,v\} \in E$ such that no node in $V'$ is forced to choose $c_u$ or $c_v$. We construct a bad Nash equilibrium of $\m G[f_K]$ by starting at a joint strategy $s$ and playing best responses. Let all nodes in $V \sm K$ choose their private color in $s$. Then at no point it is beneficial for any node to switch to $c_u$ or $c_v$ because no other node plays either of these colors. So the payoff of $v_e$ in the Nash equilibrium is 0 and the Nash equilibrium achieves a social welfare of less than $2|E|$, a contradiction.

For the last two claims: 
We reuse the construction above. Let $s^*$ be such that $s^*_v = c_v$ for all $v \in V$ and $v_e$ chooses either $c_u$ or $c_v$ for $e = \{u,v\}.$ Note that $s^*$ is optimal with social welfare $2|E|$ and in the construction of $f_K$ in the previous theorem we only forced nodes to choose $s^*$. So the \textsc{Minimum Vertex Cover} instance admits a vertex cover of size $k$ iff the solution of Problem 2 is at most $k$ iff $\m G$ iff the solution of Problem 3 is at most $k$. This shows that \textsc{Minimum Vertex Cover} is reducible to either of these problems.
\qed
\end{proof}

\begin{proof}[Theorem~\ref{thm:forcing_sw}]
Let $\m G$ be a strategic game and $s'$ a joint strategy. Then $\m G$ is called \emph{$(\lambda, \mu)$-smooth with respect to $s'$} \cite{rough:rpoa} if for all joint strategies $s$,
\[
\sum_{v \in V} p_v(s'_v, s_{-v}) \geq \lambda \SW(s') + \mu \SW(s).
\]
It was shown\footnote{To be precise, \cite{rough:rpoa} assumes $s'$ to be arbitrary, but his proof ideas extend to the result stated above.} in \cite{rough:rpoa} that if $\m G$ is $(\lambda, \mu)$-smooth w.r.t. $s'$, then the social welfare of all coarse correlated equilibria is at least $\frac{\lambda}{1-\mu} \SW(s').$

Let $K$ consist of the $k$ players with the highest payoff in $s'.$ We show that $\m G[s'_K]$ is $(\frac{k}{n},0)$-smooth with respect to $s'$.

Let $s$ be an arbitrary strategy profile in $\m G[s'_K]$. For all $v \in V$,
\[
p_v(s'_v, s_{-v}) \geq q^v(s') + \sum_{j \in K, j \neq v} q^{vj}(s').
\]

It follows that
\begin{align*}
\sum_{v \in V}p_v(s'_v, s_{-v}) &\geq \sum_{v \in N} q^v(s') + \sum_{v \in N} \sum_{j \in K, j \neq v} q^{vj} \\
&\geq \sum_{j \in K} \Big(q^j(s') + \sum_{v \in N, v \neq j} q^{vj} \Big) = \sum_{k \in K} p_v(s').
\end{align*}
Because we chose $K$ to consist of the $k$ nodes with highest payoff w.r.t. $s'$, this is at least $\frac{k}{n} \SW(s').$ So $\m G[s'_K]$ is indeed $(\frac{k}{n}, 0)$-smooth w.r.t. $s'.$

Now, we show tightness in the above stated sense. Let $k$ be given. We construct an unweighted graph coordination game on the complete graph on $n = 2k+1$ nodes. Every node can choose the colors $a$ and $b.$ Let $s^*$ be the social optimum in which every player plays $a.$ Suppose we force a coalition $K$ of size $k$ to play $f_K.$ Let the rest of the nodes play in such a way that $k$ nodes choose $a$ and the rest chooses $b.$ It is easy to see that this is a Nash equilibrium of social welfare $k(k-1) + (k+1)k = 2k^2$. The optimal social welfare is $\SW(s^*) = 2nk$. The claim follows.
\qed
\end{proof}

\end{document}